\newtheorem{conj}{Conjecture}
\newtheorem{observation}{Observation}
\begin{document}
\newcommand{\Nat}{\mathbb{N}}
\newcommand{\Int}{\mathbb{Z}}
\newcommand{\pos}{\mathrm{pos}}
\newcommand{\Run}{\mathrm{Run}}
\newcommand{\cnt}{\mathrm{cnt}}
\renewcommand{\inf}{\mathrm{inf}}
\newcommand{\fin}{\mathrm{fin}}
\newcommand{\PC}{\mathrm{PC}}
\newcommand{\A}{\mathcal{A}}
\newcommand{\G}{\mathcal{G}}
\newcommand{\Q}{{}Q{}}
\newcommand{\F}{\mathcal{F}}
\newcommand{\V}{\mathcal{V}}
\newcommand{\Pow}{\mathcal{P}}
\newcommand{\Vfo}{\mathcal{V}_\mathrm{fo}}
\newcommand{\Vso}{\mathcal{V}_\mathrm{so}}
\renewcommand{\L}{\mathcal{L}}
\newcommand{\NF}{\mathrm{NF}}
\newcommand{\PDNF}{\mathrm{PDNF}}
\newcommand{\PCNF}{\mathrm{PCNF}}
\newcommand{\MSO}{\mathrm{MSO}}
\newcommand{\MSOP}{\mathrm{MSO}^{\mathrm{wParikh}}}
\newcommand{\MSOeP}{\mathrm{MSO}^{\exists\mathrm{wParikh}}}
\newcommand{\To}{T^\omega}
\renewcommand{\rho}{\varrho}
\newcommand{\free}{\mathrm{free}}
\newcommand{\proj}{\mathrm{pr}}
\newcommand*\yield{\mathrm{yield}}
\newcommand*\lhs{\mathrm{lhs}}
\newcommand*\leaves{\mathrm{leaves}}
\newcommand*\maxcnt{\mathrm{maxcnt}}
\newcommand*\rk{\mathrm{rk}}
\newcommand*\size{\mathrm{size}}
\newcommand*\height{\mathrm{ht}}
\newcommand*\val{\mathrm{val}}

\newcommand{\ang}[1]{\langle#1\rangle}
\newcommand{\width}[1]{{\mathrm{w}(#1)}}
\newcommand{\confA}{\mathrm{conf}(\A)}
\newcommand{\compA}[1]{\mathrm{comp}_\A(#1)}
\newcommand{\derivG}[1]{\mathrm{deriv}_G(#1)}
\newcommand{\derivGp}[1]{\mathrm{deriv}_{G'}(#1)}
\newcommand{\derivGall}{\mathrm{deriv}(G)}
\newcommand{\derivGpall}{\mathrm{deriv}(G')}
\newcommand{\compAp}[1]{\mathrm{comp}_{\A'}(#1)}
\newcommand{\diff}{\mathrm{diff}}
\newcommand{\cp}{\mathrm{copy}}
\newcommand{\p}{\mathscr{p}}
\newcommand{\z}{\mathscr{z}}

\newcommand{\lhe}[1]{\color{purple}{#1}\color{black}}

\newcommand{\sepstars}{\bigskip\par\centerline{*\,*\,*}\medskip\par}%

\newcommand\xqed[1]{%
  \leavevmode\unskip\penalty9999 \hbox{}\nobreak\hfill
  \quad\hbox{#1}}
\newcommand{\exqed}{\xqed{$\triangleleft$}}

\newcommand*\lex{\leq_{\mathrm{lex}}}

\title{Global One-Counter Tree Automata}
%
%
\author{Luisa Herrmann\inst{1,2} \and
Richard Mörbitz\inst{1}}
\authorrunning{L. Herrmann, R. Mörbitz}
%
\institute{Faculty of Computer Science, TU Dresden, Germany \and
Center for Scalable Data Analytics and Artificial Intelligence (ScaDS.AI) Dresden/Leipzig, TU Dresden, Germany\\
\email{\{luisa.herrmann,richard.moerbitz\}@tu-dresden.de}}
\maketitle              
\begin{abstract}
    We introduce global one-counter tree automata (GOCTA)
    which deviate from usual counter tree automata
    by working on only one counter which is passed through the tree
    in lexicographical order,
    rather than duplicating the counter at every branching position.
    We compare the capabilities of GOCTA to those of counter tree automata
    and obtain that their classes of recognizable tree languages are incomparable.
    Moreover, we show that the emptiness problem of GOCTA is undecidable while,
    in stark contrast, their membership problem is in P.

    \keywords{one-counter automata  \and tree automata \and global counter}
\end{abstract}

\section{Introduction}

Similar to the case of finite string automata, there is a long tradition of adding counting mechanisms to finite tree automata in order to increase their expressiveness, resulting in models like \emph{pushdown tree automata} \cite{Gue81}, \emph{one-counter tree automata} (OCTA), or \emph{Parikh tree automata} \cite{KlaRue03,Kla04}.
OCTA are a special case of
pushdown tree automata
where the pushdown alphabet consists of only two symbols,
one of which being the bottom marker enabling zero-tests during a computation. Thus, OCTA generalize one-counter string automata (OCA)
by allowing branching in the input. Although not being explicitly described in the literature, they occur as 
special case of \emph{regular tree grammars with storage} \cite{Eng86} and are rather folklore.
With their classical definition, OCTA read the input top-down
where independent subtrees are processed in parallel.
At each instance of branching, the counter is duplicated
and copies are passed to the successors (cf.\@ the left of Fig.\;\ref{fig:storage-flow}).
Thus, in contrast to OCA, there is no path among which one instance of the counter
traverses the entire input.

On the other hand, Parikh tree automata (PTA) count on their input in a global manner by adding up integer vectors (assigned to symbols) over the whole tree and testing once in the end of the computation if the sum is contained in a semilinear set. While this view of the entire input allows to recognize the language $T_{a=b}$ of all trees where two fixed symbols $a$ and $b$
must have the same number of occurrences, OCTA are not able to do the same.
In addition to this global view of the storage,
PTA also differ from OCTA by allowing multiple counters, but they lack the capability to test their counters for $0$ during a run.

\cite{Castano04} introduced a global storage flow for indexed grammars,
which are essentially context-free grammars enhanced by a pushdown,
introducing global index grammars (GIG).
He showed that GIG are suitable for natural language processing (NLP)
since they have several properties which are desirable in this area:
their membership problem (called parsing in NLP) is in P
and their emptiness problem is decidable.
To prove the latter statement, he showed that their generated languages are semilinear. 

Inspired by GIG and the global counting mode of Parikh tree automata, we introduce global one-counter tree automata (GOCTA).
They are essentially OCTA where the counter is passed through the input tree
in the lexicographical order of its positions (cf.\@ the right of Fig.\;\ref{fig:storage-flow}).
Thus, a single instance of the counter reaches the entire input.
Indeed, we find that GOCTA can recognize  $T_{a=b}$.
As the first main contribution of this paper,
we explore the expressiveness of GOCTA and
we compare the classes of tree languages recognizable by OCTA and by GOCTA,
respectively;  we find that they are incomparable (Corollary~\ref{cor:incomparable}).
Second, we obtain that the emptiness problem of GOCTA is undecidable
(Theorem~\ref{thm:emptiness}) which casts doubt on the corresponding result for GIG
by~\cite{Castano04}.
Finally, we show that the membership problem of GOCTA
is decidable in polynomial time with respect to the size of the input tree (Theorem~\ref{thm:membership}).

\begin{figure}[t]
    \centering
    \includegraphics[width=0.9\textwidth]{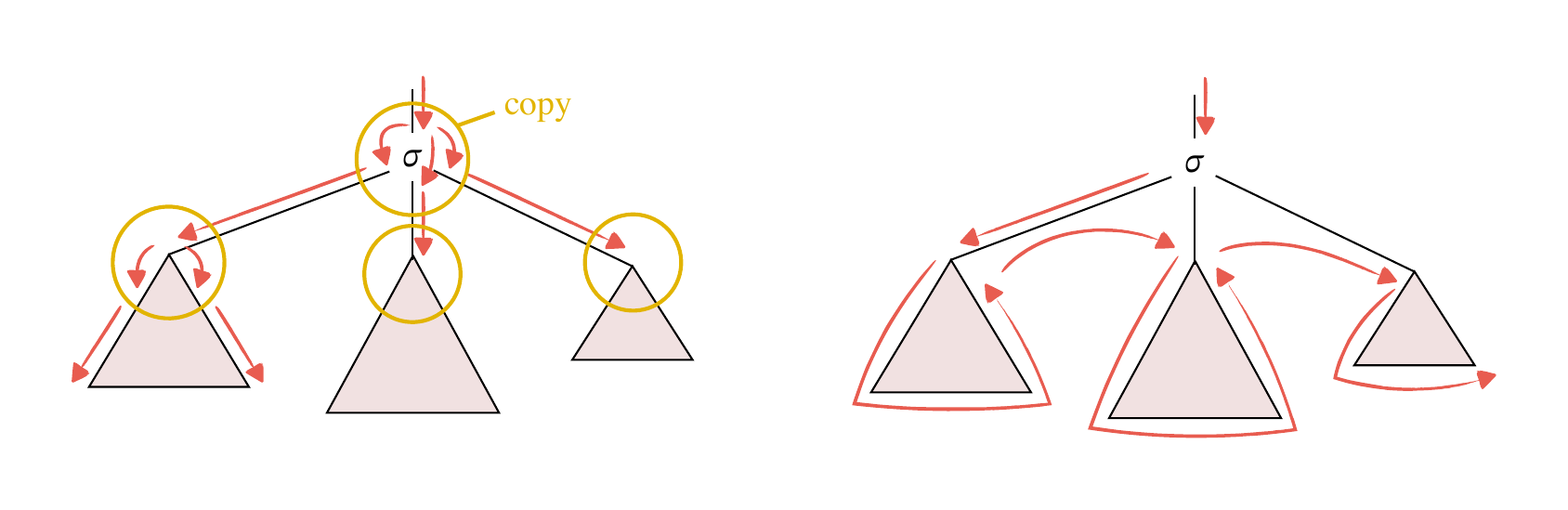}
    \caption{Storage flow of OCTA (left) vs.\@ GOCTA (right).}
    \label{fig:storage-flow}
\end{figure}

\section{Preliminaries}

In the usual way, we denote by $\Int$ and $\Nat$ the \emph{set of integers} and \emph{natural numbers}, respectively, and use $[i,j]$ to denote the interval $\{\ell\in\Int\mid i\leq \ell\leq j\}$ for $i,j\in\Int$. For each $n\in\Nat$ we set $[n]=[1,n]$.

Given a finite set $A$, we mean by $|A|$ the number of its elements.
For each $k \in \Nat$ and $a_1, \dots, a_k \in A$,
we call $w = a_1 \cdots a_k$ a \emph{a string (of length $k$) over $A$}
and, for each $i \in [k]$, we let $w[i]$ denote $a_i$.
The set of all strings of length $k$ over $A$ is denoted by $A^k$
and we let $A^* = \bigcup_{k \in \Nat} A^k$.
We note that $A^0 = \{ \varepsilon \}$ where $\varepsilon$ is the empty string.
We let $\lex$ denote the usual lexicographical order on $\Nat^*$.

Given a finite set $\Sigma$ and a mapping $\rk\colon\Sigma\to\Nat$, we call the tuple $(\Sigma,\rk)$ a \emph{ranked alphabet}. Usually we only write $\Sigma$ and assume $\rk$ implicitly. We denote, for each $n\in\Nat$, $\rk^{-1}(n)$ by $\Sigma^{(n)}$ and by writing $\sigma^{(n)}$ we mean that $\rk(\sigma)=n$.

Let $\Sigma$ be a ranked alphabet and let $H$ be some set. The set $T_\Sigma(H)$ of \emph{trees (over $\Sigma$ and indexed by $H$)} is the smallest set $T$ such that $H\subseteq T$ and $\sigma(\xi_1,\ldots,\xi_n)\in T$ for each $n\in\Nat$, $\sigma\in\Sigma^{(n)}$, and $\xi_1,\ldots,\xi_n\in T$. We write $T_\Sigma$ if $H=\emptyset$. Given $\sigma\in\Sigma^{(n)}$ and $L_1,\ldots,L_n\subseteq T_\Sigma$, we let $\sigma(L_1,\ldots,L_n)=\{\sigma(\xi_1,\ldots,\xi_n)\mid \xi_i\in L_i\}$. Each subset $L\subseteq T_\Sigma$ is called a \emph{tree language}. 

Now let $\xi\in T_\Sigma(H)$.
As usual, $\pos(\xi)\subseteq\Nat^*$ is the set of \emph{positions} of $\xi$, $\xi(\rho)\in\Sigma$ refers to the \emph{label of $\xi$ at position $\rho$}, and we set $|\xi|=|\pos(\xi)|$ standing for the \emph{size of $\xi$}.
For each $\sigma\in\Sigma$ we let $|\xi|_\sigma=|\{\rho\in\pos(\xi)\mid\xi(\rho)=\sigma\}|$.
If $\xi$ has the form $\gamma_1(\dots\gamma_k(\#))$ for $\gamma_i\in\Sigma^{(1)}$, $i\in[k]$, we denote it as the string $\gamma_1 \cdots \gamma_k \#$ of length $k+1$.
Since each string $w$ can be uniquely considered as such a \emph{monadic} tree,
we use notations defined for trees, e.g., $|w|_\sigma$, also for strings.
For each $\xi \in T_\Sigma$, we let $\height(\xi) = \max \{ |\rho| \mid \rho \in \pos(\xi) \}$.

We fix a countable set $X = \{x_1,x_2,\ldots\}$ of \emph{variables} that is disjoint from each ranked alphabet in this work and let $X_n = \{x_1,..., x_n\}$ for each $n \in\Nat$. Now let $k\geq 1$ and $\xi\in T_\Sigma(H\cup X_k)$. We say that $\xi$ is a \emph{context} if (1) for each $i\in[k]$ there is exactly one position $\rho_i\in\pos(\xi)$ with $\xi(\rho_i)=x_i$ and (2) for each $i_1,i_2\in[k]$, if $i_1<i_2$, then $\rho_i\leq_{\mathrm{lex}}\rho_j$. We denote the set of all such contexts as $C_\Sigma(H,X_k)$. The \emph{composition} of a tree $\xi\in T_\Sigma(H\cup X_k)$ with trees $\xi_1,\ldots,\xi_k\in T_\Sigma(H)$, denoted by $\xi[\xi_1,\ldots,\xi_k]$, replaces each occurrence of $x_i$ in $\xi$ by $\xi_i$.

Let $\xi \in T_\Sigma$.
We define $\ang{\xi}_0 = x_1$ and, for every $i \in [|\xi|]$,
we let $\ang{\xi}_i$ be the context obtained from $\ang{\xi}_{i-1}$
by replacing $x_1$ (occurring at position $\rho)$ with $\sigma(x_1,\ldots,x_n)$ if $\xi(\rho)=\sigma\in\Sigma^{(n)}$
(and, if necessary renaming the other variables to ensure that $\ang{\xi}_i$ is a context).
Intuitively, $\ang{\xi}_i$ contains exactly the $i$ lexicographically first nodes of $\xi$ and $\ang{\xi}_{|\xi|} = \xi$.
Moreover, we let $\width{i}\in\Nat$ be the smallest number such that $\xi_i\in C_\Sigma(X_{\width{i}})$.

\section{Global One-Counter Tree Automata}

\subsubsection{Automaton model.}
A \emph{global one-counter tree automaton} (GOCTA) is a tuple $\A=(Q,\Sigma,q_0,\Delta)$ where $Q$ is a finite set of states, $\Sigma$ is a ranked alphabet, $q_0\in Q$ is the initial state, and $\Delta$ is a finite set of transitions of the following two forms:
\begin{align*}
q&\xrightarrow{\p\slash \z} q'\tag{$\varepsilon$-transition}\\
q&\xrightarrow{\p\slash \z} \sigma(q_1,\ldots, q_n)\tag{read-transition}
\end{align*}
where $n\in\Nat$, $\sigma\in\Sigma^{(n)}$, $q,q',q_1,\ldots,q_n\in Q$, $\p\in\{0,{>}0,\top\}$, and $\z\in\Int$. We denote by $\Delta_\varepsilon$ and $\Delta_\Sigma$ the sets of all $\varepsilon$-transitions and read-transitions of $\A$, respectively. Given a transition $\tau=q\xrightarrow{\p\slash \z}w\in \Delta$, we refer to $\z$ by $\textsc{Instr}(\tau)$ and set $\textsc{Instr}(\A)=\bigcup_{\tau\in\Delta}\textsc{Instr}(\tau)$. A transition $q\xrightarrow{\top\slash 0}w$ will simply be abbreviated by $q\to w$.

The semantics of a GOCTA $\A=(Q,\Sigma,q_0,\Delta)$ is defined as follows. We denote by $\confA$ the set $T_\Sigma(Q)\times\Nat$. For each transition $\tau\in\Delta$, we let $\Rightarrow^\tau$ be the binary relation on $\confA$ such that for each $\zeta,\zeta'\in T_\Sigma(Q)$ and $m,m'\in\Nat$ we have
\[(\zeta,m)\Rightarrow^\tau(\zeta',m')\]
if there are $k\geq 1$, $\hat{\zeta}\in C_\Sigma(X_k)$ and $q,q_1,\ldots,q_{k-1}\in Q$ such that $\zeta=\hat{\zeta}[q,\bar{q}]$ (with $\bar{q}=q_1,\ldots,q_{k-1}$) and either
\begin{itemize}
    \item[-] $\tau=q\xrightarrow{\p\slash \z} q'$, $\p(m)$, $\zeta'=\hat{\zeta}[q',\bar{q}]$, $m+\z\geq 0$, and $m'=m+\z$, or
    \item[-] $\tau=q\xrightarrow{\p\slash \z} \sigma(p_1,\ldots, p_n)$, $\p(m)$, $\zeta'=\hat{\zeta}[\sigma(p_1,\ldots,p_n),\bar{q}]$, $m+\z\geq 0$, and $m'=m+\z$
\end{itemize}
where $\top(m)$ holds for all $m\in\Nat$, $0(m)$ iff $m=0$, and ${>} 0(m)$ iff $m\geq 1$.
Thus, transitions are always applied at the lexicographically first position carrying a state.
The \emph{computation relation of $\A$} is the binary relation $\Rightarrow_\A=\bigcup_{\tau\in\Delta}\Rightarrow^\tau$. 

A computation is a sequence $t=\zeta_0\Rightarrow^{\tau_1}\zeta_1\ldots\Rightarrow^{\tau_n}\zeta_n$ (sometimes abbreviated as $\zeta_0\Rightarrow^{\tau_1\ldots\tau_n}\zeta_n$) such that $n\in\Nat$, $\zeta_0,\ldots,\zeta_n\in\confA$, $\tau_1,\ldots,\tau_n\in \Delta$, and $\zeta_{i-1}\Rightarrow^{\tau_i}\zeta_i$ for each $i\in[n]$. We call $t$ \emph{successful on $\xi\in T_\Sigma$} if $\zeta_0=(q_0,0)$ and $\zeta_n=(\xi,m)$ for some $m\in\Nat$; the set of all successful computations of $\A$ on $\xi$ is denoted by $\compA{\xi}$. We set $\maxcnt(t)=\max\{m\in \Nat\mid \zeta_i=(\hat\zeta,m), i\in[n]\}$.

\begin{remark}
Given a tree $\xi\in T_\Sigma$, each computation $t\in\compA{\xi}$ is of the form
\begin{align*}
(q_0,0)\Rightarrow^{\theta_1\tau_1}(\ang{\xi}_1[q_1,\bar{q_1}],m_1)\ldots&\Rightarrow^{\theta_{n-1}\tau_{n-1}}(\ang{\xi}_{n-1}[q_{n-1},\bar{q}_{n-1}],m_{n-1})\\
&\Rightarrow^{\theta_n\tau_n}(\ang{\xi}_n,m_n)
\end{align*}
where $n=|\xi|$, $\theta_1,\ldots,\theta_n\in\Delta_\varepsilon^*$, $\tau_1,\ldots,\tau_n\in\Delta_\Sigma$, $m_1,\ldots,m_n\in\Nat$, and for each $i\in[n-1]$ we have $q_i\in Q$ and $\bar{q}_i=p_1,\ldots,p_{\width{i}-1}$ with $p_1,\ldots,p_{\width{i}-1}\in Q$. \exqed
\end{remark}

\begin{remark}
    We note that computations of GOCTA may at first glance look more like derivations of grammars than like runs of automata, since we do not carry the complete input in each computation step for the sake of readability. However, this notational peculiarity can be easily remedied. Furthermore, a slight difference between our model and grammars is that, in our case, at most one terminal is produced per computation step. This is why we speak of \enquote{automata} instead of \enquote{grammars}.
    \exqed
\end{remark}

The \emph{language recognized by $\A$} is the set $\L(\A)=\{\xi\in T_\Sigma\mid \compA{\xi}\neq\emptyset\}$. We say that a tree language $L\subseteq T_\Sigma$ is recognizable by a GOCTA, if there exists a GOCTA $\A$ with $\L(\A)=L$.

We obtain the well-known \emph{finite-state state tree automata} (FTA)
as a special case of GOCTA:
a GOCTA is an FTA if
each of its $\varepsilon$-transitions is of the form
$q \xrightarrow{\top\slash0} q'$
and
each of its read-transitions is of the form
$q \xrightarrow{\top\slash0} \sigma(p_1, \dots, p_n)$.
We drop the counter values when we denote computations of FTA.

\subsection{Examples}

We consider a few examples for tree languages recognizable by GOCTA in order to show their capabilities.

\begin{example}\label{ex:a=b}
    Let $\Sigma=\{\sigma^{(3)},a^{(1)},b^{(1)},\#^{(0)}\}$ and consider the tree language 
    \[L_{a=b}=\{\xi=\sigma(\xi_1,\xi_2,\xi_3)\mid\xi_1,\xi_2,\xi_3\in T_{\{a,b,\#\}},|\xi|_{a}=|\xi|_{b}\}\,.\]
    This language can by recognized by a GOCTA as follows: Let $\A=(Q,\Sigma,q_0,\Delta)$ where $Q=\{q_0\}\cup\{[u]_v\mid u,v\in\{a,b\}\}$ and $\Delta$ consists of the following transitions:
    \begin{itemize}
        \item $q_0\xrightarrow{\top\slash 0} \sigma([a]_v,[v]_{v'},[v']_{v''})$ for each $v,v',v''\in\{a,b\}$,
        \item $[u]_v\xrightarrow{\top\slash 1}u([u]_v)$, $[u]_v\xrightarrow{{>}0\slash -1}u'([u]_v)$, and $[u]_v\xrightarrow{0\slash 1}u'([u']_v)$ for $u,u',v\in\{a,b\}$ with $u\neq u'$, 
        and
        \item $[u]_u\xrightarrow{\top\slash 0} \#$ for each $u\in\{a,b\}$.
    \end{itemize}
    As the counter always stays positive, $\A$ needs to encode in its state whether it currently counts $a$s or $b$s. It has to guess with its first transition in which counting mode the recognition of $\xi_1$ (and $\xi_2$) ends in order to start the recognition of the following subtree in that state. \exqed
\end{example}

We note that we can easily generalize Example \ref{ex:a=b} and construct a GOCTA which recognizes the set  $T_{a=b}$ of all trees $\xi$ over $\Sigma'\supseteq\{a,b,\#\}$ with $|\xi|_a=|\xi|_b$. However, the restricted form of $L_{a=b}$ will be useful to show in Section \ref{sec:express} that this language is not recognizable by a one-counter tree automaton.

Now let us consider two more examples which demonstrate the expressiveness of GOCTA by their ability to multiply the counter value by a constant.
In these examples, we will write trees that are combs over binary symbols, i.e.,
trees of the form $\underbrace{\sigma(\dots(\sigma(}_{\text{$k$ times}}\#, \#), \underbrace{\#), \dots \#)}_{\text{$k-1$ times}}$, simply as
$\sigma^k((\#, \#), \#^{k-1})$.

\begin{figure}[t]
    \begin{subfigure}{0.49\textwidth}
        \begin{align*}
            (\zeta[p,q'], c)
            &\Rightarrow^{\tau_2} (\zeta[\sigma(p, q), q'], c-1) \\
            &\Rightarrow^{\tau_2} (\zeta[\sigma(\sigma(p, q), q), q'], c-2) \\
            & \dots \\
            &\Rightarrow^{\tau_2}
            (\zeta[\sigma^c((p, q), q^{c-1}), q'], 0) \\
            &\Rightarrow^{\tau_3}
            (\zeta[\sigma^c((\#, q), q^{c-1}), q'], 0) \\
            &\Rightarrow^{\tau_1}
            (\zeta[\sigma^c((\#, \#), q^{c-1}), q'], k) \\
            &\Rightarrow^{\tau_1}
            (\zeta[\sigma^c((\#, \#), \#q^{c-2}), q'], k \cdot 2) \\
            &\dots \\
            &\Rightarrow^{\tau_1}
            (\zeta[\sigma^c((\#, \#), \#^{c-1}), q'], k \cdot c)
        \end{align*}
    \end{subfigure}
    \begin{subfigure}{0.5\textwidth}
        \includegraphics[width=\textwidth]{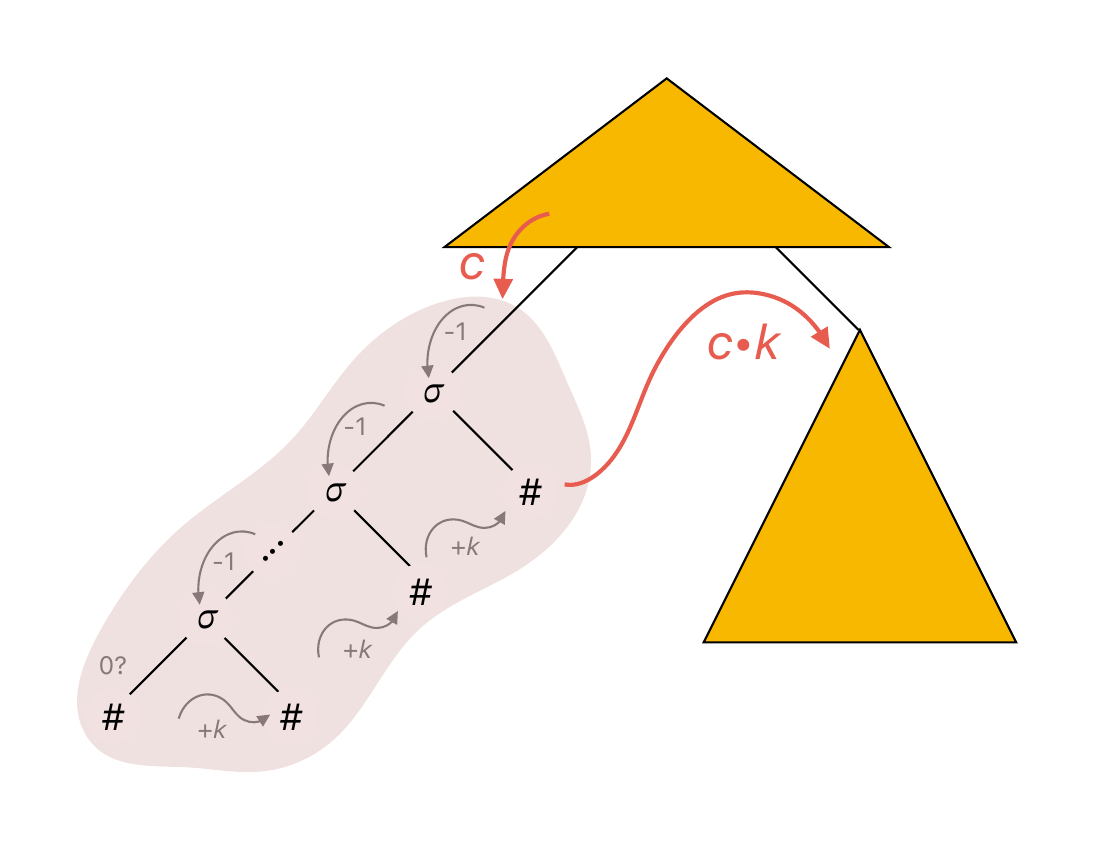}
    \end{subfigure}
    \caption{Left: computation of the GOCTA of Example~\ref{ex:multiply}
    which multiplies the counter value by $k$.
    Right: graphical representation of that computation.}
    \label{fig:multiply}
\end{figure}

\begin{example}\label{ex:multiply}
    Let $\Sigma$ be a ranked alphabet, $\sigma^{(2)}, \#^{(0)}\in\Sigma$,
    and $k \in \Nat$.
    We consider a GOCTA $\A$ whose set of transitions contains
    \[
        (\tau_1) \
        q \xrightarrow{\top \slash k} \#
        \qquad
        (\tau_2) \
        p \xrightarrow{{>}0 \slash -1} \sigma(p, q)
        \qquad
        (\tau_3) \
        p \xrightarrow{0 \slash 0} \#
        .
    \]
    Now let $\zeta \in C_\Sigma(X_2)$,
    $c \in \Nat$, and $q'$ be a state of $\A$.
    In Fig.\;\ref{fig:multiply}, we illustrate how
    $(\zeta[p,q'],c) \Rightarrow_{\A}^*
    (\zeta[\sigma^c((\#, \#), \#^{c-1}), q'], k \cdot c)$.
    Thus, recognizing the subtree
    $\sigma^c((\#, \#), \#^{c-1})$,
    the GOCTA $\A$ can multiply the counter value~by~$k$.\exqed
\end{example}

\begin{example}\label{ex:k^n}
    Now we extend Example \ref{ex:multiply} as follows: Let $\Sigma=\{\omega^{(2)},\sigma^{(2)}, \#^{(0)}\}$ and consider the GOCTA $\A=(\{q,p,f,q_0\},\Sigma,q_0,\Delta)$ where $\Delta$ consists of the transitions $\tau_1,\tau_2,\tau_3$ from Example \ref{ex:multiply} for $k = 2$ plus the transitions
    \[(\tau_0)\ q_0 \xrightarrow{\top \slash 1} \omega(p,f) \qquad (\tau_4)\ f\xrightarrow{\top \slash 0}\omega(p,f)\qquad (\tau_5)\ f\xrightarrow{\top \slash 0}\#.\]
    Then $(q_0,0)\Rightarrow^{\tau_0}(\omega(p,f),1)\Rightarrow^*(\omega(\zeta_1,f),2)\Rightarrow^{\tau_4}(\omega(\zeta_1,\omega(p,f)),2)$ and, by iterating these steps, $(\omega(\zeta_1,\omega(p,f)),2)\Rightarrow^*(\omega(\zeta_1,\omega(\zeta_2,\ldots,\omega(\zeta_n,\#)\ldots)),2^n)$ where $\zeta_i=\sigma^{(2^{i-1})}((\#, \#), \#^{(2^{i-1})-1})$.
    Intuitively, each computation of $\zeta_i$ doubles the counter value as outlined in Example~\ref{ex:multiply}.
    Thus, for each tree $\xi\in\L(\A)$ we obtain $|\xi|_\sigma=2^n-2$ for some $n\in\Nat$ and, since there is no bound for $|\xi|_\sigma$, this language is not semilinear.\exqed
\end{example}

\subsection{Normal Forms and Maximal Counter Values}

Now we introduce two normal forms of GOCTA which will simplify the proofs in the remaining work. Moreover, we will also show that, when analysing computations of GOCTA, we can restrict ourselves to those containing counter values polynomial in the size of the computed tree. This insight will later be a key ingredient in our proof that the membership problem of GOCTA is in \textsc{P}.

Let $\A$ be a GOCTA over $\Sigma$.
We set \[\L_0(\A)=\{\xi\in T_\Sigma\mid (q_0,0)\Rightarrow_\A^*(\xi,0)\}\]
and call $\A$ \emph{0-accepting} if $\L_0(\A)=\L(\A)$.
We say that $\A$ is \emph{normalized} if $\textsc{Instr}(\A)\subseteq\{-1,0,1\}$ and each of its read-transitions is of the form $q\xrightarrow{\top\slash 0} \sigma(p_1,\ldots, p_n)$.

\begin{lemma}\label{lem:0-accepting}
    Let $\A$ be a GOCTA. Then there is a 0-accepting GOCTA $\A'$ such that $\L(\A')=\L(\A)$.
\end{lemma}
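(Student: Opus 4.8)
The plan is to keep $\A$ able to do everything it already does and to add a mechanism that lets it \emph{drain} the counter down to $0$ with a block of decrementing $\varepsilon$-transitions before it finishes. Recall that a successful computation of $\A$ on $\xi$ ends at a configuration $(\xi,m)$; the obstacle is that once the tree is complete no state remains on which a transition could act, so the draining must be performed while some state is still present. The natural place is just before the very last read-transition, i.e., the one producing the $\lex$-largest position of $\xi$. Since positions are processed in the order $\lex$, this last position is the rightmost leaf of $\xi$, reached by descending to the last child at every level.

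The key idea is to let $\A'$ recognize when it is about to process this rightmost leaf, so that draining is triggered exactly there and never too early. For this I would equip each state with a Boolean flag recording whether its position lies on the rightmost path: the initial state carries flag $1$, a read-transition propagates flag $1$ only to its rightmost successor and flag $0$ to all others, and $\varepsilon$-transitions preserve the flag. Then a state with flag $1$ sits at a leaf if and only if that leaf is the $\lex$-largest position of $\xi$. Concretely, for $\A=(Q,\Sigma,q_0,\Delta)$ I set $Q'=(Q\times\{0,1\})\cup\{d_\sigma\mid\sigma\in\Sigma^{(0)}\}$ with initial state $(q_0,1)$, replace each $\varepsilon$-transition $q\xrightarrow{\p\slash\z}q'$ by $(q,b)\xrightarrow{\p\slash\z}(q',b)$ and each read-transition $q\xrightarrow{\p\slash\z}\sigma(p_1,\dots,p_n)$ of arity $n\geq 1$ by $(q,b)\xrightarrow{\p\slash\z}\sigma((p_1,0),\dots,(p_{n-1},0),(p_n,b))$. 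A nullary read-transition $q\xrightarrow{\p\slash\z}\sigma$ is kept unchanged for flag $0$, i.e.\ $(q,0)\xrightarrow{\p\slash\z}\sigma$, while for flag $1$ it is replaced by the draining gadget
\[(q,1)\xrightarrow{\p\slash\z}d_\sigma,\qquad d_\sigma\xrightarrow{{>}0\slash-1}d_\sigma,\qquad d_\sigma\xrightarrow{0\slash 0}\sigma,\]
which first checks the guard $\p$ and applies $\z$, then decrements the counter to $0$, and only then emits the leaf $\sigma$ with counter value $0$.

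With this construction the two directions are routine. For $\L(\A)\subseteq\L_0(\A')$ I would take a successful computation of $\A$ reaching $(\xi,m)$, run the same transitions in $\A'$ with the (tree-determined) flags, and observe that at the final, rightmost-leaf step, where $\A$ used some $q\xrightarrow{\p\slash\z}\sigma$ with $\p$ satisfied and $m=m_0+\z\geq 0$, the gadget is applicable and lets $\A'$ reach $(\xi,0)$; hence $\xi\in\L_0(\A')$. For $\L(\A')\subseteq\L(\A)$ I would erase the flags from any successful computation of $\A'$: this yields $\A$-transitions with identical guards, instructions and counter values everywhere except inside draining gadgets, and since a gadget can fire only at a flag-$1$ leaf, which is forced to be the $\lex$-largest position, the gadget is necessarily the final action, so prepending the erased prefix to the corresponding transition $q\xrightarrow{\p\slash\z}\sigma$ produces a successful computation of $\A$ on $\xi$. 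Together with $\L_0(\A')\subseteq\L(\A')$ these give $\L_0(\A')=\L(\A')=\L(\A)$, as required.

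The only genuine difficulty is the last point, namely ensuring that draining is never performed prematurely: if the counter were reset to $0$ at an interior leaf, a later $0$-guard in the remaining computation could succeed in $\A'$ although $\A$ could not reproduce it, breaking $\L(\A')\subseteq\L(\A)$. The rightmost-path flag resolves this precisely because it is determined by the shape of $\xi$ rather than by any nondeterministic choice, so that a flag-$1$ leaf coincides with the unique $\lex$-largest position and the gadget can only ever act there.
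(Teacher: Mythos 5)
Your proposal is correct and matches the paper's own construction essentially verbatim: the paper likewise marks the rightmost path in the states (it uses bracketed states $[q]$ where you use flag~$1$, with $[q_0]$ as initial state and read-transitions passing the bracket only to the rightmost child) and, upon reading the leaf at the $\lex$-largest position, enters a draining state (its $q_\alpha$, your $d_\sigma$) that decrements under the guard ${>}0$ and emits the leaf only under the guard $0$. The only differences are cosmetic (the paper keeps the original transitions alongside the marked copies and indexes its draining states by both the state and the symbol), and your correctness sketch, in particular the observation that the flag is determined by the tree shape so draining cannot fire prematurely, supplies exactly the argument the paper leaves implicit.
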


\begin{proof} In order to be 0-accepting, the GOCTA $\A'$ we construct needs to empty its counter before reading the rightmost leaf symbol. To do so, the automaton must know whether it is currently operating in the rightmost part of a tree. 
We encode this information in the states: for each state $q$ of $\A$, we add an auxiliary state $[q]$, which will always be passed to (and only to) the rightmost child starting from the new initial state $[q_0]$. 


    Formally, let $\A=(Q,\Sigma,q_0,\Delta)$ be a GOCTA. We construct the GOCTA $\A'=(Q',\Sigma,[q_0],\Delta')$ as follows: Let $Q'=Q\cup\{[q]\mid q\in Q\}\cup\{q_\alpha\mid q\in Q,\alpha\in\Sigma\}$ and $\Delta'=\Delta\cup T$ where $T$ consists of the following transitions:
    \begin{itemize}
        \item for each $\varepsilon$-transition $q\xrightarrow{\p\slash \z} q'\in \Delta$, the transition $[q]\xrightarrow{\p\slash \z} [q']$ is in $T$,
        \item for each transition $q\xrightarrow{\p\slash \z} \sigma(q_1,\ldots,q_n)\in\Delta$, the transition\\ $[q]\xrightarrow{\p\slash \z} \sigma(q_1,\ldots,q_{n-1}, [q_n])$ is in $T$,
        \item for each transition $q\xrightarrow{\p\slash \z}\alpha\in\Delta$, the transition $[q]\xrightarrow{\p\slash \z}q_\alpha$ is in $T$,
        \item for each $q\in Q$ and $\alpha\in\Sigma$, the transition $q_\alpha\xrightarrow{{>}0\slash -1}q_\alpha$ is in $T$, and
        \item for each $q\in Q$ and $\alpha\in\Sigma$, the transition $q_\alpha\xrightarrow{0\slash 0} \alpha$ is in $T$.
    \end{itemize}
    Clearly, $\A'$ is 0-accepting and one can show that for each $\xi\in T_\Sigma$, $\xi\in \L(\A)$ if and only if $\xi\in\L(\A')$.\qed
\end{proof} 

\begin{observation}
    Let $\A$ be a GOCTA.
    Then there is a normalized GOCTA $\A'$ such that $\L(\A')=\L(\A)$.
    Moreover, if $\A$ is 0-accepting, then $\A'$ is 0-accepting.
\end{observation}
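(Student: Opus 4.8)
The plan is to transform an arbitrary GOCTA $\A = (Q, \Sigma, q_0, \Delta)$ into a normalized one in two independent steps, each justified by a small local simulation that preserves the recognized language. The two normalization requirements are orthogonal: first I would force every read-transition into the shape $q \xrightarrow{\top\slash 0}\sigma(p_1,\ldots,p_n)$ by pushing its guard and its counter instruction onto a fresh preceding $\varepsilon$-transition, and second I would reduce every counter instruction $\z \in \Int$ to the range $\{-1,0,1\}$ by breaking a single jump of size $|\z|$ into a chain of $|\z|$ unit steps through auxiliary states.

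For the first step, I would replace each read-transition $\tau = q \xrightarrow{\p\slash \z}\sigma(p_1,\ldots,p_n)$ with a new $\varepsilon$-transition $q \xrightarrow{\p\slash \z} q_\tau$ into a fresh state $q_\tau$, followed by the plain read-transition $q_\tau \xrightarrow{\top\slash 0}\sigma(p_1,\ldots,p_n)$. Since $q_\tau$ is fresh and has exactly one outgoing transition, the only way to leave it is to immediately perform the read with no further counter change, so the composed behaviour on counter and tree is identical to that of $\tau$. For the second step, I would simulate an $\varepsilon$-transition $q \xrightarrow{\p\slash \z} q'$ with, say, $\z > 1$ by a chain $q \xrightarrow{\p\slash 1} q^{(1)} \xrightarrow{\top\slash 1}\cdots \xrightarrow{\top\slash 1} q'$ of $\z$ unit increments through fresh intermediate states, attaching the original guard $\p$ only to the first step; the cases $\z < -1$ and the guard $0$ or ${>}0$ need the guard on the first step because the counter value changes as the chain progresses, and one must check that the intermediate increments never make the counter go negative, which for increments is automatic and for the $\z<-1$ decrement chain follows from the side condition $m+\z\geq 0$ in the original transition being equivalent to feasibility of all $|\z|$ unit decrements.

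The correctness argument amounts to showing that each single computation step of $\A$ corresponds to a short block of computation steps of $\A'$ between configurations whose tree components and counter values agree, and conversely that every computation of $\A'$ decomposes into such blocks because the fresh states have no interaction with the original ones except as forced intermediaries. Since all fresh states are reachable only along the intended chains and each such chain must be completed before any other transition applies (the lexicographically-first pending state after an $\varepsilon$-step is exactly the fresh state just entered), the simulation is faithful in both directions, giving $\L(\A') = \L(\A)$.

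I expect the main obstacle to be the bookkeeping for preservation of the $0$-accepting property. Both constructions add fresh states but never alter the counter instruction totals along an accepting computation, so if $\A$ reaches $(\xi, 0)$ then $\A'$ reaches $(\xi, 0)$ by the simulating computation and vice versa; hence $\L_0(\A') = \L_0(\A)$ whenever $\L_0(\A) = \L(\A)$, and combined with $\L(\A') = \L(\A)$ this yields that $\A'$ is $0$-accepting. The delicate point is to verify that no auxiliary state is left holding a nonzero counter obligation at the end, which holds because every chain is fully consumed exactly when its corresponding original transition would have fired, leaving the counter in the same state as $\A$.
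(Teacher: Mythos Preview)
Your proposal is correct and supplies exactly the standard construction that the paper leaves implicit (the statement is given as an Observation without proof). The two-stage decomposition---first pushing the guard and instruction of each read-transition onto a fresh preceding $\varepsilon$-transition, then splitting large increments and decrements into chains of unit steps through fresh intermediate states---is the intended argument, and your justification of both directions of the simulation and of the preservation of $0$-acceptance is sound.
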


Now we show that we can restrict the counter values occurring in computations of GOCTA. We note that a similar proof idea was used in \cite{AmaJea13a} to obtain the pumping lemma for pushdown automata.

\begin{lemma}\label{lem:polybound}
    Let $L\subseteq T_\Sigma$. If $L$ is GOCTA-recognizable, then there is a normalized and $0$-accepting GOCTA $\A=(Q,\Sigma,q_0,\Delta)$ with $\L(\A)=L$ and such that for each $\xi\in L$ there is a computation $t\in\compA{\xi}$ with $\mathrm{maxcnt}(t)\leq |\xi|\cdot|Q|^2+1$.
\end{lemma}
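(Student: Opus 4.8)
The plan is to take any normalized, $0$-accepting GOCTA $\A$ for $L$ (which exists by Lemma~\ref{lem:0-accepting} and the Observation) and argue that whenever a computation uses a large counter value, it can be shortcut into one that stays polynomially bounded, while recognizing the same tree. The key observation is that in a normalized automaton every counter change is in $\{-1,0,1\}$ and every read-transition leaves the counter untouched, so the counter value evolves like a one-dimensional walk that moves by at most one per step; the read-transitions are exactly the $n=|\xi|$ steps that commit to nodes of $\xi$, and between consecutive reads there is a block of $\varepsilon$-transitions. I would measure a computation by the sequence of counter values at the $n$ read-steps together with the counter profile inside each $\varepsilon$-block.

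First I would fix, for each $\xi\in L$, an arbitrary accepting computation $t\in\compA{\xi}$ and look for places where $t$ climbs unnecessarily high. The main tool is a cut-and-paste (pumping) argument on the $\varepsilon$-blocks: if some maximal block of $\varepsilon$-transitions is so long that it revisits the same state at the same counter value, or more generally produces a counter excursion that returns to its starting height, then that excursion is a loop on a single active position and can be deleted without affecting the tree produced or the counter value at the end of the block. Concretely, within one $\varepsilon$-block the active position and the configuration are summarized by the current state in $Q$ together with the counter value, so if the block is long enough to repeat a (state, level-crossing) pattern, I can splice out the repeated segment. This is the step where I borrow the idea credited to \cite{AmaJea13a}: bound how high a counter must go between two reads by the number of distinct states, so that no counter excursion longer than $|Q|$ in one direction is ever needed.

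The counting then goes as follows. After removing all removable excursions, between any two consecutive read-transitions the counter changes by a net amount and the intermediate detour can be assumed to have height at most $|Q|$ above the relevant endpoint, since a longer monotone stretch would repeat a state at increasing counter values and hence expose a deletable or contractible loop; summing a bound of the form $|Q|$ (or $|Q|^2$ after accounting for both the block structure and the two endpoint states) over the $n=|\xi|$ reads, plus a constant additive slack for the final emptying phase forced by $0$-acceptance, yields $\maxcnt(t)\le |\xi|\cdot|Q|^2+1$. I would make the bound clean by first proving a lemma that any $\varepsilon$-block transforming counter value $m$ to $m'$ can be replaced by one whose maximal counter value is at most $\max(m,m')+|Q|^2$, and then chaining these replacements left to right so the global maximum never exceeds the accumulated contribution of the $n$ reads.

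The main obstacle I anticipate is controlling the interaction between the $0$-test and $>0$-test transitions during the contraction: deleting a counter excursion is only sound if no $\varepsilon$-transition inside the deleted segment performed a zero-test that depended on the counter being exactly at some value that the shortened computation no longer reaches, and symmetrically that no $>0$-test becomes violated after lowering the profile. I would handle this by being careful that the excursions I remove are \emph{net-zero} loops that return to the same counter value, so every test before and after the loop sees the same value, and by choosing the repeated pattern to include the current state so that the guard conditions ($\p(m)$) are preserved verbatim; the $0$-accepting normal form is exactly what lets me treat the closing descent to zero uniformly and absorb it into the additive $+1$.
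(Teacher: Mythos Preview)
Your plan has a genuine gap: removing only \emph{net-zero} excursions inside individual $\varepsilon$-blocks cannot bound the counter values $m_i$ that occur \emph{at} the read-transitions. Your proposed lemma ``any $\varepsilon$-block from $m$ to $m'$ can be replaced by one with $\maxcnt\le\max(m,m')+|Q|^2$'' is fine, but it bounds the block's overshoot relative to its endpoints, not the endpoints themselves. The net change $m'-m$ of a block is invariant under deletion of net-zero loops, so if the original computation reaches, say, counter value $10^6$ at some read (with only two reads in total), your procedure leaves that $10^6$ untouched and the global maximum stays $\ge 10^6$. The sentence ``a longer monotone stretch would repeat a state at increasing counter values and hence expose a deletable or contractible loop'' silently switches from net-zero loops to removing a \emph{net-positive} segment; but you then rule that out yourself in the next paragraph, correctly noting that a non-net-zero cut shifts all later counter values and can break zero-tests or drive the counter below~$0$.

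The missing idea---and this is exactly what the paper does---is to make a \emph{coordinated pair} of cuts: one net-$+d$ segment on the way up to the maximum and one net-$-d$ segment on the way down, chosen so that (i) both lie in $\varepsilon$-blocks (no read is lost), (ii) the region strictly between the two cuts has all counter values $> |Q|^2+1\ge d$ (so lowering by $d$ keeps every ${>}0$-test valid and no value becomes negative), and (iii) outside that region the counter values are unchanged (so all zero-tests there survive). The pigeonhole is on \emph{pairs} of states: for $z\in[|Q|^2{+}1]$ record the state $p^z$ at the last visit to level $m_j^1+z$ on the ascent and the state $f^z$ at the first visit to level $m_l^1-z$ on the descent; with $|Q|^2+1$ indices but only $|Q|^2$ pairs, two indices $z_1<z_2$ agree on $(p^{z},f^{z})$, and cutting both segments removes the same amount $d=z_2-z_1$ from each side. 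This is the step your proposal lacks, and without it the bound $|\xi|\cdot|Q|^2+1$ cannot be reached.
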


\begin{proof}
    Let $\A=(Q,\Sigma,q_0,\Delta)$ be a normalized and 0-accepting GOCTA. Let $\xi\in L$ and $t\in\compA{\xi}$ such that $\mathrm{maxcnt}(t)> |\xi|\cdot|Q|^2+1$.
    The idea of the proof is that, for the recognition of $\xi$, only the counter values up to $|\xi|\cdot|Q|^2+1$ contain relevant information, whereas $\A$ can no longer distinguish counter values greater than $|\xi|\cdot|Q|^2+1$ from each other: both counting up into and counting down from that range must necessarily contain state cycles within which $\A$ cannot test for $0$. We have chosen $\xi$ and $t$ such that $\mathrm{maxcnt}(t)$ is large enough to find two such cycles, one counting up and one counting down, that a) cause the same change (in terms of absolute amount) of the counter and b) occur between the reading of two symbols. We can show that we can cut out both cycles and the resulting computation still recognizes $\xi$.
    
    We let $m=\mathrm{maxcnt}(t)$. By the assumptions, $t$ is of the form $(q_0,0)\Rightarrow^\theta(\ang{\xi}_i[p,\bar q_i],m)\Rightarrow^\omega(\xi,0)$ for some $i\in\{0,\ldots,|\xi|-1\}$ such that the counter value $m$ is reached after $\theta$ for the first time.
    As $m>|\xi|\cdot|Q|^2+1$ and $i<|\xi|$ as well as $\textsc{Instr}(\A')\subseteq\{-1,0,1\}$, there exist $j\in\{0,\ldots,i\}$, $m_j^1\geq 1,m_j^2=m_j^1+(|Q|^2+1)\leq m$ and $p_1,p_2\in Q$ such that $\theta$ is of the form $(q_0,0)\Rightarrow^{\theta_1}(\ang{\xi}_j[p_1,\bar q_j],m_j^1)\Rightarrow^{\theta_2}(\ang{\xi}_j[p_2,\bar q_j],m_j^2)\Rightarrow^{\theta_3}(\ang{\xi}_i[p,\bar q_i],m)$ and we choose the latest such occurrences. Note that $\theta_2\in\Delta_\varepsilon^*$.

    Similarly, there are $l\in\{i,\ldots,{|\xi|-1}\}$, $m_l^1\leq m,m_l^2=m_l^1-(|Q|^2+1)\geq 1$ and $f_1,f_2\in Q$ such that $\omega$ is of the form $(\ang{\xi}_i[p,\bar q_i],m)\Rightarrow^{\omega_1}(\ang{\xi}_l[f_1,\bar q_l],m_l^1)\Rightarrow^{\omega_2}(\ang{\xi}_l[f_2,\bar q_l],m_l^2)\Rightarrow^{\omega_3}(\xi,0)$ and now we choose the first such occurrences in $\omega$. By choosing $\theta_2$ and $\omega_2$ closest to $(\ang{\xi}_i[p,\bar q_i],m)$, we obtain:

    \begin{property}
        For all counter values $m'$ occurring in configurations of the subcomputation $(\ang{\xi}_j[p_2,\bar q_j],m_j^2)\Rightarrow^{\theta_3}(\ang{\xi}_i[p,\bar q_i],m)\Rightarrow^{\omega_1}(\ang{\xi}_l[f_1,\bar q_l],m_l^1)$ we have $m'>|Q|^2+1$.
    \end{property}

    Now we zoom into the part $t=(\ang{\xi}_j[p_1,\bar q_j],m_j^1)\Rightarrow^{\theta_2}(\ang{\xi}_j[p_2,\bar q_j],m_j^2)$: let, for $z\in\{0,\ldots,|Q|^2+1\}$, $p^z\in Q$ such that $(\ang{\xi}_j[p^z,\bar q_j],m_j^1+z)$ is the configuration in $t$ with $m_j^1+z$ occurring for the last time. Clearly, $p^{|Q|^2+1}=p_2$.
    Similarly, $f^z$ denotes the state with which $m_l^1-z$ occurs for the first time in $\omega_2$. We obtain:
    
    \begin{property}
        For each $z\in\{0,\ldots,|Q|^2+1\}$ and each counter value $m'$ (except the first one) occurring in the subcomputation $(\ang{\xi}_j[p^z,\bar q_j],m_j^1+z)\Rightarrow^*(\ang{\xi}_j[p_2,\bar q_j],m_j^2)$ of $\theta_2$ we have $m'>m_j^1+z$. Similarly, for each counter value $m''$ (except the last one) occurring in the subcomputation $(\ang{\xi}_l[f_1,\bar q_l],m_l^1)\Rightarrow^*(\ang{\xi}_l[p^z,\bar q_j],m_j^1-z)$ of $\omega_2$ we have $m''>m_l^1-z$.
    \end{property}
    
    Now consider $Z=\{(p^z,f^z)\mid z\in[|Q|^2+1]\}$. Clearly, $|Z|\leq|Q|^2$. Thus, there are $z_1< z_2\in[|Q|^2+1]$ with $p^{z_1}=p^{z_2}$ and $f^{z_1}=f^{z_2}$.

    But this means that we can find a shorter computation of $\A$ for $\xi$ by cutting out the subcomputations between $p^{z_1}$ and $p^{z_2}$ as well as $f^{z_1}$ and $f^{z_2}$ at the respective position indicated above. We note that, by the above Property 1, all transitions applied in $\theta_3$ and $\omega_1$ can only use as predicate ${>} 0$ or $\top$. Thus, in the following we mainly need to argue that also after our cut we never reach 0 in the counter during these sequences and, thus, all transitions are still applicable.  Let $d=z_2-z_1$ and $\theta_2=\theta_{z_1}\theta_{z_1,z_2}\theta_{z_2}$, $\omega_2=\omega_{z_1}\omega_{z_1,z_2}\omega_{z_2}$ where $\theta_{z_1,z_2}$ and $\omega_{z_1,z_2}$ are the parts we cut out, respectively. Clearly, $(q_0,0) \Rightarrow^{\theta_1\theta_{z_1}}(\ang{\xi}_j[p^{z_1},\bar q_j],m_j^1+z_1)$. By Property 2, $(\ang{\xi}_j[p^{z_1},\bar q_j],m_j^1+z_1)\Rightarrow^{\theta_{z_2}}(\ang{\xi}_j[p_2,\bar q_j],m_j^2-d)$ as all transitions are still applicable on counter values decreased by $d$. With a similar argumentation, by Property 1, $(\ang{\xi}_j[p_2,\bar q_j],m_j^2-d)\Rightarrow^{\theta_3\omega_1}(\ang{\xi}_l[f_1,\bar q_l],m_l^1-d)$ as $d\leq|Q|^2+1$.
    Finally, by Property 2, $(\ang{\xi}_l[f_1,\bar q_l],m_l^1-d)\Rightarrow^{\omega_{z_2}}(\ang{\xi}_l[f^{z_2},\bar q_l],m_l^1-{z_2})$ and $(\ang{\xi}_l[f^{z_2},\bar q_l],m_l^1-{z_2})\Rightarrow^*(\xi,0)$.

    This procedure can be repeated until no counter value greater than $|\xi|\cdot|Q|^2+1$ does occur anymore.\qed
\end{proof}

\section{Expressiveness}\label{sec:express}

We investigate the expressiveness of GOCTA in comparison to OCTA;
recall that, at each instance of branching,
OCTA pass a copy of the counter to every successor.
We formalize OCTA as an alternative semantics of GOCTA.

Let $\A=(Q,\Sigma,q_0,\Delta)$ be a GOCTA and denote by $\textsc{ID}$ the set $Q\times\Nat$. For each transition $\tau\in\Delta$ we let $\Rightarrow^\tau_\cp$ be the binary relation on the set $T_\Sigma(\textsc{ID})$ such that for each $\zeta_1,\zeta_2\in T_\Sigma(\textsc{ID})$ we have
\(\zeta_1\Rightarrow^\tau\zeta_2\)
if there are $\hat\zeta\in C_\Sigma(\textsc{ID},X_1)$, $\hat\zeta_1,\hat\zeta_2\in T_\Sigma(\textsc{ID})$ such that $\zeta_1=\hat\zeta[\hat\zeta_1]$, $\zeta_2=\hat\zeta[\hat\zeta_2]$, and either
\begin{itemize}
    \item $\tau=q\xrightarrow{\p\slash \z} q'$, $\hat\zeta_1=(q,m)$, $\p(m)$, $m+\z\geq 0$, and $\hat\zeta_2=(q',m+\z)$, or
    \item $\tau=q\xrightarrow{\p\slash \z} \sigma(p_1,\ldots, p_n)$, $\hat\zeta_1=(q,m)$, $\p(m)$, $m+\z\geq0$, and\\ $\hat\zeta_2=\sigma((q_1,m+\z),\ldots,(q_n,m+\z))$.
\end{itemize}
Thus, transitions can be applied at any position carrying a state.
We let $\Rightarrow_\cp=\bigcup_{\tau\in\Delta}\Rightarrow^\tau_\cp$ and $\L_{\cp}(\A)=\{\xi\in T_\Sigma\mid (q_0,0)\Rightarrow_{\cp}^*\xi\}$. A tree language $L$ \emph{can be recognized by an OCTA} if there exists a GOCTA $\A$ with $\L_{\cp}(\A)=L$.

\begin{example}
    Let $\Sigma=\{\sigma^{(2)},a^{(1)},b^{(1)},c^{(1)},\#^{(0)}\}$ and consider the tree language
    \(L_{a\sigma{}bc}=\{a^n(\sigma(b^n\#,c^n\#))\mid n\in\Nat\}\,.\)
    This language can be recognized by the OCTA $\A=(\{q_0,q_1,q_2\},\Sigma,q_0,\Delta)$ where $\Delta$ contains the transitions ${q_0\xrightarrow{\top\slash 1}a(q_0)}$, $q_0\xrightarrow{\top\slash 0}\sigma(q_1,q_2)$,
    $q_1\xrightarrow{{>} 0\slash -1}b(q_1)$, $q_2\xrightarrow{{>} 0\slash -1}c(q_2)$, $q_1\xrightarrow{0\slash 0}\#$, and $q_2\xrightarrow{0\slash 0}\#$.\exqed
\end{example}

\begin{lemma}\label{lem:lasbc-not-gocta}
    The tree language $L_{a\sigma{}bc}$ is not recognizable by a GOCTA.
\end{lemma}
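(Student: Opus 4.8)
The plan is to show that $L_{a\sigma bc}=\{a^n(\sigma(b^n\#,c^n\#))\mid n\in\Nat\}$ cannot be recognized by any GOCTA by exploiting the fundamental difference in storage flow: a GOCTA passes a \emph{single} counter through the tree in lexicographical order, so by the time it finishes reading the left subtree $b^n\#$ and begins the right subtree $c^n\#$, the counter holds only one number together with the current state. I would argue that this is insufficient to simultaneously guarantee that both the $b$-branch and the $c$-branch have length exactly $n$ (matching the $a^n$ prefix), whereas the OCTA succeeds precisely because it \emph{duplicates} the counter at $\sigma$ and checks each branch independently.

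First I would set up the argument by contradiction: suppose $\A=(Q,\Sigma,q_0,\Delta)$ is a GOCTA with $\L(\A)=L_{a\sigma bc}$, and by Lemma~\ref{lem:polybound} I may assume $\A$ is normalized and $0$-accepting with the polynomial $\maxcnt$ bound, so that $\textsc{Instr}(\A)\subseteq\{-1,0,1\}$. Fix a large $n$ and consider a computation $t\in\compA{a^n(\sigma(b^n\#,c^n\#))}$. Because the traversal is lexicographic, $t$ factors into three consecutive phases: reading the $a^n$ spine, then the left subtree $b^n\#$, then the right subtree $c^n\#$. The key observable passing between the left and right subtrees is the pair $(q,m)$ where $q$ is the state handed to the root of the $c$-branch and $m$ is the counter value at that moment. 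The heart of the proof is a cut-and-paste (pumping) lemma on the counter/state behaviour. Since the counter is bounded polynomially and the number of states is finite, within the reading of $a^n$ and within $b^n\#$ there must be repeated $(\text{state},\text{counter})$ configurations as $n$ grows; I would pump these to relate the $a$-count, the $b$-count, and the information transmitted to the $c$-branch.

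The concrete strategy is to fix $n$ and show one can pump the $a$-prefix together with the $b$-branch while leaving the $c$-branch essentially untouched, producing an accepted tree with unequal exponents. More precisely, I would identify a cycle in the first two phases (reading $a$s and $b$s) that changes the counter by some net amount but returns to the same state, so that iterating it yields trees $a^{n+p}(\sigma(b^{n+p}\#,c^n\#))$ or similar; the crucial point is to choose the cycle so that the state--counter pair $(q,m)$ delivered to the $c$-branch is \emph{unchanged}, which forces the $c$-branch to still read exactly $n$ symbols while the $a$- and $b$-counts have increased. This breaks the equality $|{\cdot}|_a=|{\cdot}|_b=|{\cdot}|_c$ and yields a tree in $\L(\A)\setminus L_{a\sigma bc}$, the desired contradiction. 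The technical device for keeping the handoff configuration fixed while still pumping is exactly the style of cut argument already carried out in the proof of Lemma~\ref{lem:polybound}.

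The main obstacle I anticipate is controlling the interaction between the spine reading and the left subtree: because a single counter is threaded through both the $a^n$ part and the $b^n$ part, the counter value arriving at $\sigma$ encodes a \emph{combination} of the two lengths, and one must rule out that $\A$ secretly enforces $|{\cdot}|_a=|{\cdot}|_b$ by some counter invariant and then separately uses the residual counter to check the $c$-branch. The right way to handle this is to argue at the level of the finite handoff information: for fixed transmitted pair $(q,m)$, the set of right subtrees $c^k\#$ that $\A$ can complete to a successful $0$-accepting computation depends only on $(q,m)$, and since $m$ ranges over a bounded set while $k$ is unbounded, a counting/pigeonhole argument shows $\A$ cannot pin $k$ to the specific value $n$ determined by the left part. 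Making this quantitative — choosing $n$ large enough relative to $|Q|$ and the polynomial $\maxcnt$ bound so that two different left parts force the same handoff $(q,m)$ yet demand different right-branch lengths — is where the bulk of the careful case analysis will lie.
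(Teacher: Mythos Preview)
Your approach is genuinely different from the paper's, and the direction is reasonable, but there is a concrete gap.

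\textbf{What the paper does.} The paper does \emph{not} pump directly. Instead it observes that a GOCTA traversing $a^n(\sigma(b^n\#,c^n\#))$ in lexicographical order is essentially a one-counter \emph{string} automaton reading $a^n\sigma b^n\# c^n\#$; the only obstacle is the single branching at $\sigma$, which is eliminated by encoding the right child's state $q_3$ into the finite control while processing the $b$-branch (states $\langle q,q_3\rangle$). This gives an OCA for the string language $L=\{a^n\sigma b^n\# c^n\#\mid n\in\Nat\}$, and since $L$ is not context-free, no such OCA exists --- contradiction. So the paper reduces to a known non-context-freeness fact rather than carrying out a bespoke pumping argument.

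\textbf{The gap in your sketch.} Your pigeonhole step asserts that the handoff counter value $m$ ``ranges over a bounded set while $k$ is unbounded''. This is false: even with Lemma~\ref{lem:polybound}, the bound on $m$ is $|\xi|\cdot|Q|^2+1$, which is $\Theta(n)$ for $\xi=a^n(\sigma(b^n\#,c^n\#))$. So as $n$ grows there are $\Theta(n)$ available handoff pairs $(q,m)$, exactly enough to encode $n$ and pin the $c$-branch length. A straight pigeonhole on $(q,m)$ across different $n$ therefore does not yield a collision. Your alternative idea --- find synchronized cycles in the $a$-phase and $b$-phase whose net counter effect cancels so that the handoff $(q,m)$ is preserved while $(|\,\cdot\,|_a,|\,\cdot\,|_b)$ changes --- could in principle work, but you have not argued that such matching cycles must exist; the reference to the proof style of Lemma~\ref{lem:polybound} is not enough, since that lemma removes $\varepsilon$-loops to lower the maximum counter, whereas here you need to \emph{insert} read-transition cycles while keeping a specific intermediate configuration fixed. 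Making this precise amounts to re-proving, inside your argument, that $\{a^n b^n c^n\}$ is not a one-counter language --- which is exactly the fact the paper simply invokes after its reduction.

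\textbf{Comparison.} Your route is more self-contained but substantially harder to complete rigorously; the paper's reduction is short, modular, and offloads the combinatorics to the classical result that $L$ is not context-free.
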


\begin{proof}
    The proof is based on the following intuition.
    In order to check whether a tree $\xi$ is in $L_{a\sigma{}bc}$,
    a GOCTA $\A$ must traverse $\xi$ in lexicographical order.
    Thus, it effectively checks whether $h(\xi)$
    is in $L = \{ a^n \sigma b^n \# c^n \# \mid n \in \Nat \}$
    where $h$ is the tree homomorphism induced by
    $h(\sigma) = \sigma x_1 x_2$, $h(a) = a x_1$, $h(b) = b x_1$, $h(c) = c x_1$,
    and $h(\#) = \#$.
    We show that there exists a one-counter string automaton $\A'$
    such that $\A'$ recognizes $L$ if and only if $\A$ recognizes $L_{a\sigma{}bc}$.
    Since $L$ is not context-free, this entails that no GOCTA can recognize $L_{a\sigma{}bc}$.

    Formally, assume that there exists a normalized and $0$-accepting GOCTA
    $\A = (Q, \Sigma, q_0, \Delta)$ such that $L(\A) = L_{a\sigma{}bc}$.
    We observe that, for every $\xi \in L(\A)$ and $t \in \compA{\xi}$,
    the computation $t$ has the form
    \begin{equation}\label{eq:lgamma3-comp}
        t =(q_0,0)\Rightarrow^{ \theta_a \; (q_1 \to \sigma(q_2, q_3)) \; \theta_b \; (q_4 \to \#) \; \theta_c \; (q_5 \to \#)}(\xi,0)
    \end{equation}
    where $\theta_a, \theta_b, \theta_c \in \Delta^*$
    only contain $a$, $b$, and $c$, respectively, in their read transitions.
    Without loss of generality, we can assume that the sets of states occurring in
    $\theta_a$, $\theta_b$, and $\theta_c$ are pairwise disjoint
    (thus, e.g., $q_1$ can only occur in $\theta_a$).

    We only sketch the definition of the one-counter string automaton $\A'$;
    its set of transitions $\Delta'$ is obtained from $\Delta$ as follows.
    For each transition $q_1 \to \sigma(q_2, q_3)$,
    we add $q_1 \to \sigma\langle q_2, q_3\rangle$ to $\Delta'$
    where $\langle q_2, q_3\rangle$ becomes a state of $\A'$;
    then, for every transition $q_4 \to \#$ and for every transition of the form
    $q \to b(q')$ or $q \xrightarrow{\p \slash \z} q'$ that can occur in $\theta_b$,
    we add $\langle q_4,q_3\rangle \to \# q_3$, $\langle q, q_3\rangle \to b\langle q', q_3\rangle$,
    or $\langle q, q_3\rangle \xrightarrow{\p \slash \z} \langle q', q_3\rangle$ to $\Delta'$, resp.
    Thus, intuitively, we encode the single instance of branching in $\A$
    into the state behaviour of $\A'$.
    All other transitions of $\A$
    are added to $\Delta'$ unchanged.

    One can show that,
    for every $\xi \in L(\A)$ and $t \in \compA{\xi}$,
    there exists a matching $t' \in \compAp{h(\xi)}$.
    Vice versa,
    for every $w \in L(\A')$ and $t \in \compAp{w}$,
    there exists $\xi \in L(\A)$ and a matching $t' \in \compA{\xi}$
    such that $h(\xi) = w$.
    Thus, we obtain that $h(L_{a\sigma{}bc}) = L = L(\A')$.
    Since $L$ is not context-free, this is a contradiction.
    Thus, $\A'$ cannot exist and, hence, $\A$ does not recognize $L_{a\sigma{}bc}$.\qed
\end{proof}

\begin{lemma}\label{lem:lab-not-octa}
    $L_{a=b}$ cannot be recognized by OCTA.
\end{lemma}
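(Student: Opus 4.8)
The plan is to exploit that every tree in $L_{a=b}$ has the same root, namely the single ternary symbol $\sigma$, so that under $\Rightarrow_{\cp}$ the three subtrees are processed completely independently, coupled only through the counter value present when the root is read. First I would record the following factorization. In any derivation $(q_0,0)\Rightarrow_{\cp}^* \sigma(\xi_1,\xi_2,\xi_3)$ the only node present before the first read-transition is the root, so the derivation begins with $\varepsilon$-transitions followed by one read-transition for $\sigma$, reaching $\sigma((q_1,c),(q_2,c),(q_3,c))$, and afterwards each subtree is expanded on its own. Hence $\sigma(\xi_1,\xi_2,\xi_3)\in\L_{\cp}(\A)$ iff there is a tuple $(q_1,q_2,q_3,c)\in R$ with $(q_i,c)\Rightarrow_{\cp}^*\xi_i$ for every $i$, where $R$ is the set of configurations reachable after relabeling the root; crucially, $R$ does not depend on $\xi_1,\xi_2,\xi_3$.

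Write $d(\xi)=|\xi|_a-|\xi|_b\in\Int$. For a fixed profile $P=(q_1,q_2,q_3)$ and a fixed $c$, let $E_i(P,c)=\{\,d(\xi)\mid (q_i,c)\Rightarrow_{\cp}^*\xi\,\}$. By the factorization, the difference-triples realized through $(P,c)$ are exactly the product $E_1(P,c)\times E_2(P,c)\times E_3(P,c)$, and since $\L_{\cp}(\A)=L_{a=b}$ every realized tree satisfies $d(\xi_1)+d(\xi_2)+d(\xi_3)=0$. Thus this product is contained in the hyperplane $\{(d_1,d_2,d_3)\mid d_1+d_2+d_3=0\}$. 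Because that hyperplane meets each coordinate axis transversally, a nonempty box contained in it must be a single point: if, say, $e,e'\in E_1(P,c)$ and $e_2\in E_2(P,c)$, $e_3\in E_3(P,c)$, then $e+e_2+e_3=0=e'+e_2+e_3$ forces $e=e'$. Hence each nonempty $E_i(P,c)$ is a singleton $\{e_i(P,c)\}$ with $e_1(P,c)+e_2(P,c)+e_3(P,c)=0$.

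It remains to derive a contradiction by a counting argument. Each subtree is processed by a one-counter string automaton; treating the start-counter $c$ as a unary input block and invoking Parikh's theorem, the relation $\{(c,d)\mid d\in E_i(P,c)\}\subseteq\Nat\times\Int$ is semilinear. Combined with the singleton property, $c\mapsto(e_1,e_2,e_3)(P,c)$ is a single-valued map with semilinear graph, hence affine in $c$ on each residue class of some modulus. Therefore the set $T$ of all realized difference-triples is a finite union of one-parameter arithmetic progressions in $\Int^3$, so it contains only linearly many points with all coordinates in $[-N,N]$. On the other hand, since $\A$ recognizes $L_{a=b}$, every triple with $d_1+d_2+d_3=0$ is realized: take $\xi_i=a^{d_i}\#$ when $d_i\geq 0$ and $\xi_i=b^{-d_i}\#$ otherwise, so that $\sigma(\xi_1,\xi_2,\xi_3)\in L_{a=b}$. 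The number of such triples with coordinates bounded by $N$ grows quadratically in $N$, which for large $N$ cannot be covered by $T$. This contradiction shows that no OCTA recognizes $L_{a=b}$. I expect the main obstacle to be the semilinearity step together with making the ``box in a transverse hyperplane forces singleton slices'' reduction and the resulting linear-versus-quadratic growth estimate fully precise; the factorization itself is routine once the root-branching structure is isolated.
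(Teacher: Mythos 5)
Your proof is correct in substance and shares its skeleton with the paper's: both first factor every successful $\Rightarrow_\cp$-computation through the unique root step $(q_0,0)\Rightarrow^*(q,c')\Rightarrow\sigma((q_1,c),(q_2,c),(q_3,c))$, both then establish that each reachable pair $(q_i,c)$ determines a \emph{unique} difference $|\xi_i|_a-|\xi_i|_b$ (your ``box in a transverse hyperplane'' argument is exactly the paper's Property~1, proved the same way by mixing subtrees), and both conclude by playing the quadratically many required difference-triples ($3k^2+3k+1$ of them in a box of radius $k$) against a linear upper bound on what the automaton can realize. Where you genuinely diverge is in how that linear upper bound is obtained. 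The paper proves a pumping lemma for OCTA (Lemma~\ref{lem:octa-bound} in the appendix): by cutting out synchronized counter cycles on the stem and in all three subtrees simultaneously, the counter value $c$ at the branching point can be forced below $\height(\xi)\cdot|Q|^4+1$, so only linearly many tuples $(c,q_1,q_2,q_3)$ are available for trees of height $\le k+1$. You instead encode the start counter as a unary prefix, invoke Parikh's theorem to get semilinearity of $\{(c,d)\mid d\in E_i(P,c)\}$, and use single-valuedness to force all period vectors of each linear component to be proportional, so the realized triples lie on finitely many lines and a box of radius $N$ contains only $O(N)$ of them. Your route avoids the delicate cycle-cutting construction entirely, at the price of importing Parikh's theorem and of having to prove carefully the structural fact that a single-valued relation with semilinear graph over one natural-number parameter is covered by finitely many arithmetic progressions (the ``proportional periods'' argument); that step is true and fillable, but it is the one place where your sketch is still a claim rather than a proof, as you yourself note. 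The paper's route is more elementary and self-contained but longer; yours is shorter once the semilinear machinery is accepted.
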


\begin{proof}
    If there exists an OCTA $\A = (Q, \Sigma, q_0, \Delta)$ with $\L_\cp(\A) = L_{a=b}$,
    we can assume that there are disjoint $Q_1, Q_2 \subseteq Q$ such that $Q_1 \cup Q_2 = Q$ and each computation of $\A$ has the form $(q_0,0) \Rightarrow_{\A}^* (q,m) \Rightarrow_{\A} \sigma((q_1,m), (q_2,m), (q_3,m)) \Rightarrow_{\A}^* \sigma(\xi_1,\xi_2,\xi_3)$ with $q_1,q_2,q_3 \in Q_2$.
    Moreover, for each $q \in Q_2$, no state in $Q_1$ is reachable.
    For every $n \in \Int$, we let $L_n = \{ w\# \mid w \in \{a,b\}^*, |w|_a = |w|_b + n\}$,
    and, for every $c \in \Nat$ and $q \in Q$, we let $L_c^q = \{ \xi\in T_\Sigma \mid (q,c) \Rightarrow_{\A}^* \xi \}$.
    We obtain:

    \begin{property}\label{prop:lcq-subset-lm}
        For every $(q,c)\in\textsc{ID}$ with $q \in Q_2$ that occurs in a successful computation, there exists $n \in \Int$ such that $L_c^q \subseteq L_n$.
    \end{property}
    If Property~\ref{prop:lcq-subset-lm} did not hold,
    then $\A$ could derive $w_1, w_2 \in \{a,b\}^*\#$
    such that $|w_1|_a - |w_1|_b \not= |w_2|_a - |w_2|_b$ from $(q,c)$.
    Since $(q,c)$ occurs in a successful computation,
    there would exist a $\xi \in \L_\cp(\A)$ with $|\xi|_a \not= |\xi|_b$;
    hence $\L_\cp(\A) \not= L_{a=b}$.

    We observe that, for every $\xi \in L_{a=b}$,
    there exist unique $n_1, n_2, n_3 \in \Int$ such that
    $\xi \in \sigma(L_{n_1}, L_{n_2}, L_{n_3})$.
    We let $k \in \Nat$ and consider the set
    \begin{align*}
        F(k) &= \{ (n_1, n_2, n_3) \mid \exists \xi \in L_{a=b}\colon \xi \in \sigma(L_{n_1}, L_{n_2}, L_{n_3}), |n_1|, |n_2|, |n_3| \le k \} \\
        &\overset{\star}= \{ (n_1,n_2,n_3) \mid n_1,n_2,n_3 \in \{ -k, \dots, +k \}, n_1 + n_2 + n_3 = 0 \}
    \end{align*}
    where $\star$ holds by definition of $L_{a=b}$.
    Note that, if $\height(\xi) \le k+1$, then $n_1, n_2, n_3 \le k$, and,
    for every $(n_1, n_2, n_3) \in F(k)$,
    there is a $\xi \in L_{a=b} \cap \sigma(L_{n_1}, L_{n_2}, L_{n_3})$
    with $\height(\xi) \le k+1$.
    Using combinatorial arguments, one can show $|F(k)| = 3k^2 + 3k + 1$.

    Let $\xi, \xi' \in L_{a=b}$ and $n_1, n_2, n_3, n_1', n_2', n_3' \in \Int$ such that
    $\xi \in \sigma(L_{n_1}, L_{n_2}, L_{n_3})$
    and $\xi' \in \sigma(L_{n_1'}, L_{n_2'}, L_{n_3'})$.
    By definition of $\A$, there exist
    ${q_1, q_2, q_3, q_1', q_2', q_3' \in Q_2}$
    and
    $c, c' \in \Nat$
    such that
    $\xi \in \sigma(L_c^{q_1}, L_c^{q_2}, L_c^{q_3})$ and $\xi' \in \sigma(L_{c'}^{q_1'}, L_{c'}^{q_2'}, L_{c'}^{q_3'})$.
    Now, by Property~\ref{prop:lcq-subset-lm}, we obtain the following.
    If $(n_1, n_2, n_3) \not= (n_1', n_2', n_3')$,
    then also $(c, q_1, q_2, q_3) \not= (c', q_1', q_2', q_3')$.
    We define $D(k)$ to be the set
    \[
        \{ (c, q_1, q_2, q_3) \! \mid \! (q_0,0) \Rightarrow_{\A}^* (q,c) \Rightarrow_{\A} \sigma((q_1,c), (q_2,c), (q_3,c)) \Rightarrow_{\A}^* \xi, \height(\xi) \le k+1 \}.
    \]
    Thus, $|D(k)|$ must be at least $|F(k)|$, i.e., $|D(k)| \ge 3k^2 + 3k + 1$.
    Then there exists $(c, q_1, q_2, q_3) \in D(k)$ with $c \ge \frac{3k^2 + 3k + 1}{|Q|^3}$.
    Using a method like in Lemma~\ref{lem:polybound},
    we can show that, if $\xi\in\L_\cp(\A)$,
    then there exist $m\leq\height(\xi)\cdot |Q|^{4}+1$ and
    a computation of the form
    \((q_0,0)\Rightarrow^*(q,m)\Rightarrow\sigma((q_1,m),(q_2,m),(q_3,m))\Rightarrow^*\xi\).
    Thus, for each $(c, q_1, q_2, q_3) \in D(k)$,
    we can assume $c \le (k+1) \cdot |Q|^4 + 1$.
    This is a contradiction for sufficiently large $k$.\qed
\end{proof}

\begin{corollary}\label{cor:incomparable}
    The classes of tree languages recognizable by OCTA and GOCTA are incomparable.
\end{corollary}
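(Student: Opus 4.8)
The plan is to derive Corollary~\ref{cor:incomparable} as an immediate consequence of the two preceding lemmas, which together establish separations in both directions. Recall that two classes of languages are \emph{incomparable} precisely when neither is contained in the other; thus it suffices to exhibit one language recognizable by a GOCTA but not by any OCTA, and a second language recognizable by an OCTA but not by any GOCTA.

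First I would observe that $L_{a=b}$ serves as the first witness. By Example~\ref{ex:a=b}, $L_{a=b}$ is recognizable by a GOCTA, while Lemma~\ref{lem:lab-not-octa} shows that $L_{a=b}$ cannot be recognized by any OCTA. This establishes that the class of GOCTA-recognizable tree languages is \emph{not} contained in the class of OCTA-recognizable ones. Symmetrically, $L_{a\sigma bc}$ furnishes the second witness: the example preceding Lemma~\ref{lem:lasbc-not-gocta} exhibits an OCTA recognizing $L_{a\sigma bc}$, whereas Lemma~\ref{lem:lasbc-not-gocta} proves that no GOCTA recognizes it. Hence the class of OCTA-recognizable languages is not contained in the class of GOCTA-recognizable ones.

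Combining these two non-containments, I would conclude that neither class is a subset of the other, which is exactly the definition of incomparability. The proof is therefore a one-line deduction from the already-established lemmas and examples, so there is no genuine technical obstacle at this stage; the substantive work was carried out in proving Lemmas~\ref{lem:lasbc-not-gocta} and~\ref{lem:lab-not-octa}. The only point to state carefully is that the two classes share common elements (for instance, both recognize all regular tree languages, since FTA are a special case of GOCTA and, via the copying semantics, also arise as OCTA), so incomparability here genuinely means mutual non-inclusion rather than disjointness.
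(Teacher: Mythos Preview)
Your proposal is correct and matches the paper's approach exactly: the corollary is stated without proof in the paper, being an immediate consequence of Example~\ref{ex:a=b}, the OCTA example for $L_{a\sigma bc}$, and Lemmas~\ref{lem:lasbc-not-gocta} and~\ref{lem:lab-not-octa}. Your write-up spells out precisely the intended one-line deduction.
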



\section{Decision problems}

Now we turn to decidability results. In the following, we examine the emptiness problem and the membership problem of GOCTA in this regard.

\subsection{Emptiness}
\label{sec:emptiness}

First we investigate the question whether the emptiness problem is decidable for GOCTA. Unfortunately, the answer is negative -- we can relate GOCTA to a similar string formalism: \emph{indexed counter grammars with a global counter semantics} (ICG) \cite[Sec. 5]{DusMidPar92} (called \emph{R-mode derivation}). This model can be understood as a context-free grammar with a global counter controlling its derivations and was shown to be Turing-complete \cite[Thm. 5.1]{DusMidPar92}. We note that, in contrast, with GOCTA we cannot perform the same calculations without reading symbols. However, we can show that, given an ICG $G$, we can construct a GOCTA $\A$ recognizing all computation trees of $G$. Thus, we obtain that the emptiness problem of ICG can be reduced to the emptiness problem of GOCTA.

\begin{theorem}\label{thm:emptiness}
    Given a GOCTA $\A$, it is undecidable whether $\L(\A)=\emptyset$.
\end{theorem}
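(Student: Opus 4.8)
The plan is to establish undecidability by reduction from the emptiness problem of indexed counter grammars with global-counter (R-mode) semantics, which the paper already identifies as Turing-complete via \cite[Thm.~5.1]{DusMidPar92}. The core idea is that an ICG $G$ is essentially a context-free grammar whose derivations are gated by a single global counter that threads through the derivation, and this is exactly the storage discipline of a GOCTA. So I would construct, from a given ICG $G$, a GOCTA $\A_G$ whose recognizable trees are precisely the \emph{successful computation trees} (derivation trees reaching a terminal/accepting configuration) of $G$. Then $\L(\A_G) = \emptyset$ if and only if $G$ generates no successful derivation, i.e.\ if and only if the language of $G$ is empty. Since emptiness for ICG is undecidable (it follows from Turing-completeness that one cannot decide whether the computation halts/accepts), emptiness for GOCTA is undecidable too.

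First I would recall the formal definition of ICG in R-mode: nonterminals, a finite set of productions each carrying a counter predicate (a zero-test or positivity test) and an integer increment, and a global counter passed along the derivation in a fixed order. Second, I would fix the exact shape of a computation tree of $G$: the tree recording which production is applied at each derivation step, with the counter value flowing through the positions in lexicographical order. Third, I would define $\A_G$ so that each ICG production becomes a read-transition (or sequence of an $\varepsilon$-transition and a read-transition) of $\A_G$ carrying the same predicate $\p\in\{0,{>}0,\top\}$ and the same increment $\z\in\Int$; the GOCTA's lexicographic counter flow is set up to mirror the R-mode global counter exactly, so that a tree is in $\L(\A_G)$ iff it is a successful computation tree of $G$. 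Fourth, I would verify the two directions of the correspondence: every accepting $G$-derivation yields a successful computation $t\in\compA{\xi}$ for the corresponding tree $\xi$, and conversely every successful GOCTA computation decodes to an accepting $G$-derivation, with the counter predicates and increments matching step for step.

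The main obstacle, and the point flagged in the paper's own preamble to the section, is the mismatch that ``with GOCTA we cannot perform the same calculations without reading symbols.'' An ICG in R-mode can perform counter operations on $\varepsilon$-productions or during internal derivation steps that do not necessarily emit a terminal symbol, whereas a GOCTA read-transition consumes a tree position (produces a symbol), and only $\varepsilon$-transitions are counter-operating-without-reading but those do not advance the tree. The delicate part of the construction is therefore to pad the computation tree with auxiliary symbols so that every counter-affecting step of $G$ is faithfully simulated while still producing a well-formed finite tree whose mere existence certifies acceptance; one must ensure this padding does not accidentally enable spurious accepting trees that do not correspond to real $G$-derivations, nor block legitimate ones. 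Concretely, I expect to introduce dedicated padding symbols and states that allow $\A_G$ to perform each counter test-and-update of $G$ at a freshly created leaf or unary node, so that the correspondence between trees and derivations stays bijective. Once this faithful encoding is in place, the reduction is immediate: $\L(\A_G)=\emptyset$ iff $G$ has no accepting derivation, and undecidability transfers from ICG to GOCTA.
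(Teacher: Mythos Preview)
Your proposal is correct and follows essentially the same approach as the paper: reduce from emptiness of indexed counter grammars in R-mode by building a GOCTA that recognizes exactly the computation trees of a given ICG $G$, so that $\L(\A_G)=\emptyset$ iff $L_R(G)=\emptyset$. The only notable difference is in how the ``cannot compute without reading symbols'' issue is resolved: rather than padding with auxiliary symbols as you suggest, the paper takes the ranked alphabet to be the productions of $G$ themselves (so every derivation step automatically reads a symbol), discards the terminals of $G$ entirely (irrelevant for emptiness), and handles the per-child counter increments $\beta_i$ via $\varepsilon$-transitions from auxiliary states $[B_i\beta_i]$; this is slightly cleaner but your padding idea would work as well.
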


An intuition on this result is given by Example \ref{ex:multiply}: In \cite[Thm. 5.1]{DusMidPar92} it was shown that with help of multiplication, a counter can be used to simulate a pushdown storage by representing each of its configurations as a number with base $m$ (where $m$ is the size of the pushdown alphabet). Since a context-free grammar using a global pushdown is Turing-complete, this entails the result.

\subsubsection{Relation to GIG.}
The above undecidability result seems to be in contrast with~\cite{Castano04},
where it was shown that the languages generated by Global Index Grammars (GIG)
are semilinear, from which it was deduced that the emptiness problem of GIG is decidable.
In essence, a GIG is a context-free grammar enhanced with a global pushdown storage
where, additionally, each pushing rule
must start its right-hand side with a terminal.

We briefly recall the definition of GIG
\cite[Definition~6]{Castano04}.
A GIG is a tuple $G = (N, T, I, S, \bot, P)$ where
$N$, $T$, and $I$ are pairwise disjoint alphabets
(\emph{nonterminals}, \emph{terminals}, and \emph{stack indices}, respectively),
$S \in N$ (\emph{start symbol}),
$\bot \not\in N \cup T \cup I$ (\emph{stack start symbol}),
and $P$ is a finite set (\emph{productions}),
each of which having one of the following forms:
\[
    (i.1) \
    A \xrightarrow[\varepsilon]{} \alpha
    \qquad
    (i.2) \
    A \xrightarrow[{[y]}]{} \alpha
    \qquad
    (ii.) \
    A \xrightarrow[x]{} a \alpha
    \qquad
    (iii.) \
    A \xrightarrow[\bar x]{} \alpha
\]
where
$x \in I$, $y \in I \cup \{\bot\}$, $A \in N$, $\alpha \in (N \cup T)^*$, and $a \in T$.
Intuitively, $(i.1)$ is a context-free production,
$(i.2)$ checks if the topmost pushdown symbol is $y$ (without changing the pushdown),
$(ii.)$ pushes $x$, and $(iii.)$ pops $x$.

By assuming GIG with only one pushdown symbol (plus the bottom marker),
we can relate this formalism to GOCTA.
For instance, we can define a GIG $\A'$ with $I = \{\gamma,\bot\}$ which generates strings encoding the trees of the GOCTA $\A$ from Example~\ref{ex:k^n}. Therefore, we let $\A'$ contain the transitions
\begin{itemize}
    \item $q_0 \xrightarrow[\gamma]{} \omega(p,f)$ for $q_0 \xrightarrow{\top \slash 1} \omega(p,f)$,
    \item $p \xrightarrow[\bar \gamma]{} \sigma(p, q)$ for $p \xrightarrow{{>}0 \slash -1} \sigma(p, q)$,
    \item $p \xrightarrow[{[\bot]}]{} \#$ for $p \xrightarrow{0 \slash 0} \#$, and
    \item $f \xrightarrow[\varepsilon]{} \omega(p,f)$ and $f \xrightarrow[\varepsilon]{} \#$ for $f\xrightarrow{\top \slash 0}\omega(p,f)$ and $f\xrightarrow{\top \slash 0}\#$, respectively,
\end{itemize}
by regarding the symbols of $\A$ (including \enquote{(}, \enquote{)}, and \enquote{,}) as string terminals. Moreover, for the transition $\tau_1 = (q \xrightarrow{\top \slash 2} \#)$, which uses a $2$-increment, we add two rules to $\A'$ that each push a $\gamma$. Formally, we add the symbol $\beta^{(1)}$ to $\Sigma$,
use the new state $q_1$,
and add the rules $q \xrightarrow[\gamma]{} \beta(q_1)$ and $q_1 \xrightarrow[\gamma]{} \#$ to $\A'$.
Finally, we note that in the original definition, derivations of GIG are required to end with
the pushdown consisting of $\bot$.
This corresponds to $0$-accepting GOCTA and, clearly,
changing $\A'$ accordingly
does not violate the constraints of GIG.

Since $L(\A)$ is not semilinear, it is not hard to see that also $L(\A')$ is not semilinear.
This falsifies the claim that GIG generate only semilinear languages~\cite[Chap. 4, Thm. 5]{Castano04}
and also casts doubt on the emptiness result.

\begin{conj}
    The emptiness problem of GIG~\cite{Castano04} is undecidable.
\end{conj}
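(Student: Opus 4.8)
The plan is to strengthen the reduction that was just sketched for Example~\ref{ex:k^n} into a general reduction from the emptiness problem of GOCTA, which is undecidable by Theorem~\ref{thm:emptiness}, to the emptiness problem of GIG. Concretely, I would show that from a normalized and $0$-accepting GOCTA $\A = (Q,\Sigma,q_0,\Delta)$ (which we may assume by Lemma~\ref{lem:0-accepting} and the subsequent normalization) one can construct a GIG $G$ over a single stack index $\gamma$ together with the bottom marker $\bot$, and a fixed string encoding $\mathrm{enc}$ of trees (the lexicographic/preorder serialisation of a tree's labels together with the terminals ``$($'', ``$)$'', and ``$,$'', as in the example), such that $\L(\A) = \emptyset$ iff $L(G) = \emptyset$. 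Here the $0$-accepting property of $\A$ matches the GIG requirement that derivations terminate with the stack equal to $\bot$. Since GOCTA emptiness is undecidable, this reduction yields the undecidability of GIG emptiness.

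The translation of transitions exploits that $\A$ is normalized, so that $\textsc{Instr}(\A) \subseteq \{-1,0,1\}$ and read-transitions leave the counter untouched. A read-transition $q \xrightarrow{\top\slash 0} \sigma(p_1,\dots,p_n)$ becomes the context-free rule $q \xrightarrow[\varepsilon]{} \sigma\,(\,p_1,\dots,p_n\,)$ of type $(i.1)$, producing the tree structure. For $\varepsilon$-transitions, the counter is simulated by the single-symbol stack as follows: a test-free decrement ($\p \in \{\top, {>}0\}$, $\z = -1$) maps to a pop $(iii)$, namely $q \xrightarrow[\bar\gamma]{} q'$ (note that $\top\slash{-1}$ is applicable only when the counter is positive, so popping $\gamma$ is faithful); a zero test without change maps to $(i.2)$ with $[\bot]$; a positive test without change maps to $(i.2)$ with $[\gamma]$; and the trivial $\top\slash 0$ maps to $(i.1)$.

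The genuinely delicate cases are the increments, because GIG forces every pushing production to be of type $(ii)$, i.e.\ it must emit a terminal before the rest of its right-hand side. I would resolve this by introducing a fresh padding terminal $c$ and mapping each increment $q \xrightarrow{\top\slash 1} q'$ to the push rule $q \xrightarrow[\gamma]{} c\,q'$. A combined zero-test-and-increment $q \xrightarrow{0\slash 1} q'$ has no single GIG rule (no production both inspects the stack and pushes), so I would split it through a fresh nonterminal $q''$: first $q \xrightarrow[{[\bot]}]{} q''$ of type $(i.2)$, then $q'' \xrightarrow[\gamma]{} c\,q'$ of type $(ii)$. These padding terminals do not affect emptiness, since they are emitted in lockstep with push operations and can be erased by a homomorphism $h$ with $h(L(G)) = \{\mathrm{enc}(\xi) \mid \xi \in \L(\A)\}$; hence a terminating derivation of $G$ exists iff a successful $0$-accepting computation of $\A$ on the matching tree exists.

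The heart of the correctness argument, and the step I expect to be the main obstacle, is to certify that the global (R-mode) stack of $G$ is threaded through a derivation in exactly the lexicographic order in which $\A$ passes its counter through the tree. Since $\A$ always applies transitions at the lexicographically first position carrying a state, and since a leftmost derivation of $G$ expands the leftmost nonterminal while emitting $\mathrm{enc}(\xi)$ left-to-right in preorder, the two orders should coincide, and the counter value of $\A$ then equals the number of $\gamma$'s on the stack of $G$ at the matching point. I would make this precise by an induction on the length of the computation/derivation that establishes a bijection, modulo padding, between computations in $\compA{\xi}$ and leftmost derivations of $\mathrm{enc}(\xi)$, preserving at each step the pair (current leftmost state, stack height) against (current leftmost nonterminal, counter value). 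Verifying that the R-mode discipline of~\cite{Castano04} genuinely forces this leftmost threading, rather than permitting the stack to be consumed in some other order across parallel branches, is the crux; once it is settled, the emptiness equivalence and hence the undecidability follow.
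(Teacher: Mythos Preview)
The paper does not prove this statement; it is explicitly left as a \emph{conjecture}. The only evidence the authors give is the construction, sketched just before the conjecture, of a single GIG (obtained from the GOCTA of Example~\ref{ex:k^n}) whose language is not semilinear, thereby refuting the premise of Castano's decidability argument. There is therefore no proof in the paper to compare your proposal against.

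As a stand-alone attempt, your reduction is plausible and is the natural generalisation of the paper's example. The padding-terminal trick for type~$(ii)$ productions is correct for emptiness, and the case split over $(\p,\z)$ is essentially complete for a normalized, $0$-accepting GOCTA (you should also mention $\p={>}0,\z=1$, but it splits the same way as $\p=0,\z=1$). Two caveats, though. First, watch the terminology: ``R-mode'' is the derivation discipline of the indexed counter grammars of~\cite{DusMidPar92} used in Section~\ref{sec:emptiness}, not of Castano's GIG; what you need from~\cite{Castano04} is the precise constraint that productions of types $(i.2)$, $(ii)$, $(iii)$ may only rewrite the \emph{leftmost} nonterminal, while type~$(i.1)$ rules are unrestricted. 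Second, and this is why the authors stopped at a conjecture: whether Castano's definition really enforces exactly that leftmost discipline, and whether the unrestricted applicability of type~$(i.1)$ rules genuinely commutes with the stack-touching rules in the way you need, has to be extracted from~\cite{Castano04} itself rather than from the paper under review. Your identification of this as ``the crux'' is accurate, but it is also the entire content of the open problem---until that is pinned down, the argument remains a proof sketch conditional on a reading of an external source, which is precisely the status the paper assigns it.
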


\subsection{Membership}

Surprisingly, in contrast to emptiness, the (uniform) membership problem for GOCTA is not only decidable but also turns out to be rather easy: we will show with help of the polynomial bound on the counter values in computations we obtained in Lemma \ref{lem:polybound}, that we can decide membership in polynomial time.

\begin{definition}
    Let $\A = (Q, \Sigma, q_0, \Delta)$ be a normalized GOCTA and $k \in \Nat$.
    The \emph{$k$-bounded behaviour automaton of $\A$} is the FTA
    $\A' = (Q', \Sigma, q_0', \Delta')$ where
    $Q' = Q \times [0,k] \times [0,k]$,
    $q_0' = (q_0, 0, 0)$, and $\Delta'$ is defined as follows.
    For every $n \geq 1$, $q \to \sigma(q_1,\ldots, q_n)$ in $\Delta$,
    and $i_0, \dots, i_n \in [0,k]$,
    we let $(q,i_0,i_n) \to \sigma((q_1,i_0,i_1), \ldots, (q_n,i_{n-1},i_n))$
    in $\Delta'$,
    for every $q \to \alpha$ in $\Delta$ and $i \in [0,k]$,
    we let $(q,i,i) \to \alpha$ in $\Delta'$, and,
    for every $q\xrightarrow{\p\slash \z} q'$ in $\Delta$ and $i_0, i_1 \in [0,k]$,
    if $\p(i_0)$ holds and $0 \le i_0 + \z \le k$, then we let
    $(q,i_0,i_1) \to (q',i_0+\z,i_1)$ in~$\Delta$.
\end{definition}

\begin{observation}\label{obs:behaviour-automaton-size}
    $|Q'| \le |Q| \cdot (k+1) \cdot (k+1)$ and
    $|\Delta'| \le |\Delta| \cdot (k+1)^{(\max\rk(\Sigma)+1)}$.
\end{observation}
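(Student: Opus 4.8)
The plan is to treat the two bounds separately, both by direct counting from the definition of the $k$-bounded behaviour automaton. For the state bound, observe that by definition $Q' = Q \times [0,k] \times [0,k]$, and the interval $[0,k]$ contains exactly $k+1$ elements. Hence $|Q'| = |Q| \cdot (k+1) \cdot (k+1)$, which is in fact an equality and immediately yields the stated inequality.

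For the transition bound, the idea is to cover $\Delta'$ by the generating sets of the individual transitions of $\A$. For each $\tau \in \Delta$, let $S_\tau \subseteq \Delta'$ be the set of all transitions of $\A'$ that the definition produces from $\tau$ (via the clause matching the form of $\tau$). Since every transition of $\A'$ is produced by some such clause, we have $\Delta' = \bigcup_{\tau \in \Delta} S_\tau$, and the union bound gives $|\Delta'| \le \sum_{\tau \in \Delta} |S_\tau|$. It therefore suffices to show $|S_\tau| \le (k+1)^{\max\rk(\Sigma)+1}$ for every $\tau$, from which $|\Delta'| \le |\Delta| \cdot (k+1)^{\max\rk(\Sigma)+1}$ follows. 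Using a covering rather than a partition is deliberate: it lets us ignore the question of whether two distinct transitions of $\A$ happen to generate a common transition of $\A'$ (which would only shrink $|\Delta'|$ further).

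Next I would bound each generating set $S_\tau$ by a case distinction on the form of $\tau$. If $\tau = q \to \sigma(q_1,\ldots,q_n)$ is a read-transition with $n \geq 1$, the derived transitions are indexed by the tuples $(i_0,\ldots,i_n) \in [0,k]^{n+1}$, so $|S_\tau| = (k+1)^{n+1}$; since $n = \rk(\sigma) \leq \max\rk(\Sigma)$, this is at most $(k+1)^{\max\rk(\Sigma)+1}$. If $\tau = q \to \alpha$ is a rank-$0$ read-transition, the derived transitions are indexed by a single $i \in [0,k]$, giving $|S_\tau| = k+1 \leq (k+1)^{\max\rk(\Sigma)+1}$. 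Finally, if $\tau = q \xrightarrow{\p\slash\z} q'$ is an $\varepsilon$-transition, the derived transitions are indexed by the pairs $(i_0,i_1) \in [0,k]^2$ satisfying the side conditions $\p(i_0)$ and $0 \leq i_0+\z \leq k$, of which there are at most $(k+1)^2 \leq (k+1)^{\max\rk(\Sigma)+1}$.

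The argument is entirely routine and there is no genuine obstacle; the only point meriting a moment of care is the last case, where the bound $(k+1)^2 \le (k+1)^{\max\rk(\Sigma)+1}$ requires $\max\rk(\Sigma) \ge 1$. This is a harmless assumption, since if every symbol of $\Sigma$ were nullary then no tree of positive height would exist and the behaviour automaton would be trivial.
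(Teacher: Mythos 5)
Your proof is correct and uses the direct counting argument that the paper leaves implicit: the statement appears there as an unproven Observation immediately following the definition, and the intended justification is exactly this enumeration of generated transitions per clause. Your additional remark about the $\varepsilon$-transition case is a genuine (if pedantic) catch -- when $\max\rk(\Sigma)=0$ and $k\ge 1$ the stated bound can in fact fail for automata with $\varepsilon$-transitions, so flagging the implicit assumption $\max\rk(\Sigma)\ge 1$ is more careful than the paper itself.
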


\begin{lemma}\label{lem:behaviour-automaton-language}
    Let $\A = (Q, \Sigma, q_0, \Delta)$ be a GOCTA and $\xi \in T_\Sigma$.
    If $\A'$ is the $(|\xi|\cdot|Q|^2+1)$-bounded behaviour automaton of $\A$, then
    $\xi \in \L(\A) \iff \xi \in \L(\A')$.
\end{lemma}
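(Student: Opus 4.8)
The plan is to prove the two directions of the biconditional by exhibiting a correspondence between computations of $\A$ and $\A'$. Write $k = |\xi| \cdot |Q|^2 + 1$. The key intuition is that a state $(q, i_0, i_n)$ of the behaviour automaton $\A'$ records a state $q$ of $\A$ together with the counter value $i_0$ that $\A$ holds when it first arrives at the corresponding position (in lexicographical order) and the counter value $i_n$ it will hold when it finally leaves the subtree rooted there. Since $\A'$ is an FTA, a successful run of $\A'$ on $\xi$ is simply a consistent labelling of $\pos(\xi)$ by such triples, and the transition rules enforce exactly that these ``entry'' and ``exit'' counter values chain together correctly: at a branching position the exit value of one child equals the entry value of the next, and the whole $\varepsilon$-transition mechanism of $\A$ is simulated by the third type of rule, which updates $i_0$ by $\z$ whenever the predicate $\p(i_0)$ holds and the result stays in $[0,k]$.

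First I would prove the direction $\xi \in \L(\A) \Rightarrow \xi \in \L(\A')$. By Lemma~\ref{lem:polybound} (and the Observation giving normalization), I may assume $\A$ is normalized and $0$-accepting and pick a computation $t \in \compA{\xi}$ with $\maxcnt(t) \le k$. Using the canonical form of computations from the first Remark, $t$ visits the positions of $\xi$ in lexicographical order; I would read off, for each position $\rho$, the counter value at which $\A$ first places a state at $\rho$ and the counter value at which that subcomputation beginning at $\rho$ terminates, and assemble these into triples. The bound $\maxcnt(t) \le k$ guarantees all these values lie in $[0,k]$, so the induced labelling is a valid run of $\A'$; the fact that every counter update in $t$ respects its predicate and nonnegativity translates directly into applicability of the matching $\A'$-rule.

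For the converse $\xi \in \L(\A') \Rightarrow \xi \in \L(\A)$, I would take a successful run of $\A'$ on $\xi$, which assigns to each position a triple, and reassemble from these the $\varepsilon$-transition segments and read-transitions of $\A$ in lexicographical order. Because each $\A'$-rule of the third type was included only when $\p(i_0)$ holds and $0 \le i_0 + \z \le k$, replaying the corresponding $\A$-transitions yields a genuine computation of $\A$ whose counter values are exactly the bookkept entry/exit values, hence nonnegative; the read-rules of $\A$ being of the form $q \to \sigma(\dots)$ (by normalization) carry the counter unchanged across branching, matching the entry/exit constraint $i_0 = i_0$, $i_n = i_n$ encoded in $\A'$. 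Concatenating these segments produces a successful computation on $\xi$, so $\xi \in \L(\A)$.

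The main obstacle is the careful bookkeeping in the first direction: one must verify that the lexicographic traversal of $t$ can be cleanly decomposed so that the ``exit'' counter value assigned to the last-visited child of a node coincides with the ``exit'' value assigned to the node itself, and that the intermediate $\varepsilon$-segments between reading consecutive symbols are attributed to the right position without double-counting. This is where the precise definition of $\ang{\xi}_i$ and $\width{i}$ must be invoked to align the configurations of $t$ with the positions of $\xi$. I expect the converse direction to be comparatively routine once the invariant is stated, since it is essentially a direct unfolding of the run of $\A'$ back into transitions of $\A$.
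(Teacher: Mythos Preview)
Your plan is correct and matches the paper's approach closely: both directions proceed by assuming $\A$ is normalized and $0$-accepting, invoking Lemma~\ref{lem:polybound} to bound the counter, and then translating step-by-step between a computation of $\A$ and one of $\A'$ while maintaining the invariant that the first counter component in an $\A'$-state tracks the current counter value of $\A$ and the second tracks the value upon exiting the subtree. The paper makes explicit precisely the bookkeeping you flag as the main obstacle---peeking ahead in $t$ to read off the counter values $c_{u_j}$ at which each child branch will later be entered, and checking the leaf case where the exit value must agree with the next sibling's entry---so your identification of where the work lies is accurate.
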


\begin{proof}
    We assume that $\A$ is normalized and $0$-accepting. Clearly, these constructions are polynomial wrt. the size of $\A$.
    Let $m = |\xi|\cdot|Q|^2+1$ and $\A' = (Q', \Sigma, (q_0, 0, 0), \Delta')$.
    In order to show that $\xi \in \L(\A) \implies \xi \in \L(\A')$,
    we let $\xi \in L(\A)$.
    Then, by Lemma~\ref{lem:polybound},
    there exists $t \in \compA{\xi}$
    such that $\maxcnt(t) \le m$.
    We let $t = (\zeta_0, c_0) \Rightarrow^{\tau_1 \cdots \tau_n} (\zeta_n, c_n)$
    and we note that $\zeta_0 = q_0$, $\zeta_n = \xi$, and $c_0 = c_n = 0$.
    The idea of the proof is to construct $t' \in \compAp{\xi}$,
    denoted by $t' = \zeta_0' \Rightarrow^{\tau'_1 \cdots \tau'_n} \zeta_n'$
    such that, for every $i \in \{0,\dots,n\}$, the following invariant holds.
    If $\zeta_i = \ang{\xi}_k[q, \bar{q}]$,
    then $\zeta_i' = \ang{\xi}_k[(q,c_i,c'), \bar{p}]$ with $|\bar{q}| = |\bar{p}|$.
    We note that most of this construction is straightforward and only highlight
    two interesting cases.
    Let $i \in [n]$, $\zeta_{i-1} = \ang{\xi}_k[q, \bar{q}]$, and $\zeta_{i-1}' = \ang{\xi}_k[(q,c_{i-1},c'), \bar{p}]$.

    If $\tau_i = q \to \sigma(p_1, \dots, p_\ell)$ with $\ell > 1$,
    then, for each $j \in [2,\ell]$, we encode the counter value
    of $\A$ when it derives $p_j$ into
    the $j$-th state of~$\tau_i'$.
    For this, let $u_j \in [n]$ be the step
    when $p_j$ is derived in the computation of $\xi$ in~$\A$.
    Then
    \begin{equation}\label{eq:branching-form}
        \tau'_i = (q, c_{i-1}, c') \to \sigma((p_1, c_{i-1}, c_{u_2}), (p_2, c_{u_2}, c_{u_3}), \dots, (p_\ell, c_{u_\ell}, c'))
        \ \in \ \Delta'.
    \end{equation}
    Clearly, $\zeta_{i-1}' \Rightarrow^{\tau'_i} \zeta'_i = \ang{\xi}_{k+1}[(p_1, c_{i-1}, c_{u_2}), (p_2, c_{u_2}, c_{u_3}), \dots, (p_\ell, c_{u_\ell}, c'), \bar{p}]$.

    If $\tau_i = q \to \alpha$ and $|\bar{p}| \not= 0$,
    then there exists $i' \in [i-1]$ such that $\tau_{i'}$ is of form~\eqref{eq:branching-form}
    and the current state $(q, c_{i-1}, c')$ is derived from some state $(p,c'_1,c'_2)$ in the right-hand side of $\tau_{i'}$.
    We choose the largest such $i'$.
    By definition of $\Delta'$, the derivation of $(q, c_{i-1}, c')$ from $(p,c'_1,c'_2)$
    does not change the third component, hence $c' = c'_2$.
    Moreover, the state which occurs right of $(p,c'_1,c'_2)$
    in $\tau_{i'}$ has the form $(p',c'_2,c'_3)$
    and this is also the first state in $\bar{p}$.
    Then, $c'_2 = c_i$ (cf.\@ the previous case) and $c_{i-1} = c_i$ because $\A$ is normalized, hence
    $\tau'_i = (q,c_{i-1},c') \to \alpha$ is in $\Delta'$.
    Clearly, $\zeta_{i-1}' \Rightarrow^{\tau'_i} \zeta'_i = \ang{\xi}_{k+1}[\bar{p}]$.


    In order to show that $\xi \in \L(\A') \implies \xi \in \L(\A)$,
    we let $\xi \in \L(\A')$.
    Then there exists $t \in \compAp{\xi}$;
    we let $t = \zeta_0 \Rightarrow^{\tau_1 \cdots \tau_n} \zeta_n$
    and we note that $\zeta_0 = (q_0, 0, 0)$ and $\zeta_n = \xi$.
    Now we construct
    $t' = (\zeta_0', c_0) \Rightarrow^{\tau'_1 \cdots \tau'_n} (\zeta_n', c_n)$
    in $\compA{\xi}$
    analogously to the other direction of the proof,
    but now we have to show that the state behaviour of $\A'$
    guarantees a valid counter behaviour of $\A$.
    \qed
\end{proof}

Due to the following well-known result,
we can construct an $\varepsilon$-transition free FTA $\A''$ such that $\L(\A') = \L(\A'')$
in polynomial time.

\begin{lemma}[cf., e.g., {\cite[Theorem\,1.1.5]{tata}}]\label{lem:fta-epsilon-free}
    For every FTA $\A$ there exists an $\varepsilon$-transition free FTA $\A'$ with $\L(\A') = \L(\A)$
    and $\A'$ can be constructed in polynomial time wrt.\@ the size of $\A$.
\end{lemma}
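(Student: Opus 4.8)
The plan is to prove the classical statement that $\varepsilon$-transitions can be eliminated from a finite tree automaton (FTA) while preserving the recognized language, with only a polynomial blow-up in size. First I would make the notion of $\varepsilon$-closure precise: for each state $q \in Q$, define $E(q) = \{ q' \in Q \mid q \Rightarrow^* q' \text{ using only } \varepsilon\text{-transitions} \}$, i.e.\ the set of states reachable from $q$ by a (possibly empty) chain of $\varepsilon$-transitions. This is a standard reachability relation on the finite set $Q$, so each $E(q)$ can be computed in polynomial time, e.g.\ by taking the reflexive-transitive closure of the $\varepsilon$-transition relation (a transitive-closure computation over a graph with $|Q|$ vertices and at most $|\Delta_\varepsilon|$ edges).

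Next I would define the $\varepsilon$-free FTA $\A' = (Q, \Sigma, q_0, \Delta')$ over the same state set, replacing every read-transition by its $\varepsilon$-closed variants. Concretely, for each read-transition $q \to \sigma(p_1,\ldots,p_n)$ in $\Delta$ and each $q'$ with $q \in E(q')$ (equivalently, each $q'$ from which $q$ is $\varepsilon$-reachable), I would add the transition $q' \to \sigma(p_1,\ldots,p_n)$ to $\Delta'$, and drop all $\varepsilon$-transitions. Since the number of added transitions is bounded by $|Q| \cdot |\Delta_\Sigma|$, the resulting automaton $\A'$ is of polynomial size, and the construction runs in polynomial time once the closures $E(q)$ are available.

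The correctness argument splits into the two inclusions $\L(\A) \subseteq \L(\A')$ and $\L(\A') \subseteq \L(\A)$, both proved by structural induction on the input tree $\xi$. For $\L(\A) \subseteq \L(\A')$, given a successful run of $\A$ on $\xi$, one contracts each maximal initial block of $\varepsilon$-steps at a position into a single $\varepsilon$-closed read-transition of $\A'$; the inductive hypothesis handles the subtrees at the children. For the converse inclusion, each read-transition $q' \to \sigma(p_1,\ldots,p_n)$ of $\A'$ arises from some $q \to \sigma(p_1,\ldots,p_n)$ in $\Delta$ with $q \in E(q')$, and one reinserts the witnessing $\varepsilon$-chain $q' \Rightarrow^* q$ before applying the read-transition, again combining with the inductive hypothesis on the subtrees.

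I expect the main conceptual obstacle to be purely notational rather than mathematical: the semantics of our automata is given via the lexicographic computation relation $\Rightarrow_\A$ on configurations in $T_\Sigma(Q)$ rather than via the more familiar top-down run assignment, so the induction must be phrased carefully so that the contraction (resp.\ reinsertion) of $\varepsilon$-blocks at one position does not interfere with the processing of positions that come later in lexicographic order. However, since for an FTA the counter is irrelevant (all transitions have guard $\top$ and instruction $0$) and the subtrees rooted at distinct children are processed independently, the $\varepsilon$-steps can be localized to a single position, and the inductive argument goes through routinely. As this is a well-known result, I would in practice simply cite \cite[Theorem\,1.1.5]{tata} rather than reproduce the full induction.
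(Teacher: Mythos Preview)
Your proof sketch is correct and is the standard $\varepsilon$-closure construction; the paper does not give its own proof of this lemma at all but simply cites \cite[Theorem\,1.1.5]{tata} as a well-known result. So your approach is exactly what the cited reference does, and your final remark that you would just cite the reference is precisely what the paper itself does.
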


\begin{theorem}\label{thm:membership}
    For every GOCTA $\A$ over $\Sigma$ and $\xi \in T_\Sigma$
    it can be decided in polynomial time whether $\xi \in \L(\A)$.
\end{theorem}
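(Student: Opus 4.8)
The plan is to assemble the polynomial-time membership algorithm from the three ingredients that have already been established. Given a GOCTA $\A = (Q, \Sigma, q_0, \Delta)$ and an input tree $\xi \in T_\Sigma$, I would first normalize $\A$ and make it $0$-accepting using Lemma~\ref{lem:0-accepting} and the subsequent observation; both constructions are polynomial in the size of $\A$ and independent of $\xi$. Setting $m = |\xi| \cdot |Q|^2 + 1$, I would then construct the $m$-bounded behaviour automaton $\A'$ of $\A$. By Observation~\ref{obs:behaviour-automaton-size}, its state set has size at most $|Q| \cdot (m+1)^2$ and its transition set size is bounded by $|\Delta| \cdot (m+1)^{\max\rk(\Sigma)+1}$; since $m$ is polynomial in $|\xi|$ and the maximal rank of $\Sigma$ is a fixed constant of $\A$, the automaton $\A'$ is of size polynomial in $|\xi|$ (and in $|\A|$) and can be built in polynomial time.

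Next I would invoke Lemma~\ref{lem:behaviour-automaton-language}, which guarantees exactly the equivalence $\xi \in \L(\A) \iff \xi \in \L(\A')$ for this particular choice of bound $m$. This reduces the membership question for the GOCTA $\A$ to a membership question for the FTA $\A'$. To decide the latter efficiently, I would eliminate the $\varepsilon$-transitions of $\A'$ by applying Lemma~\ref{lem:fta-epsilon-free}, obtaining an $\varepsilon$-free FTA $\A''$ with $\L(\A'') = \L(\A')$, again in time polynomial in the size of $\A'$ and hence polynomial in $|\xi|$.

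Finally, membership $\xi \in \L(\A'')$ for an $\varepsilon$-free FTA is decided by the standard bottom-up evaluation: compute, for each position of $\xi$ in a single bottom-up pass, the set of states of $\A''$ from which the corresponding subtree is accepted, and check whether $q_0'$ labels the root. This runs in time polynomial in $|\xi| \cdot |\A''|$, and accepts iff $\xi \in \L(\A'')=\L(\A')$, which by Lemma~\ref{lem:behaviour-automaton-language} holds iff $\xi \in \L(\A)$. Composing the polynomial bounds of the three steps yields the overall polynomial running time.

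The only subtle point is bookkeeping the dependence on $\max\rk(\Sigma)$: the transition-count bound in Observation~\ref{obs:behaviour-automaton-size} has $(m+1)$ raised to the rank, so I would stress that the rank is a constant determined by the fixed alphabet of $\A$ and not part of the varying input $\xi$, which keeps the exponent bounded and the whole construction polynomial. Everything else is a routine chaining of the already-proved lemmas, so I expect no genuine obstacle beyond making this polynomiality argument explicit.
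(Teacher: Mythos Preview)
Your proposal is correct and follows essentially the same route as the paper: normalize, build the $m$-bounded behaviour automaton for $m=|\xi|\cdot|Q|^2+1$, invoke Lemma~\ref{lem:behaviour-automaton-language}, remove $\varepsilon$-transitions via Lemma~\ref{lem:fta-epsilon-free}, and decide FTA membership. Your explicit discussion of the dependence on $\max\rk(\Sigma)$ is a welcome clarification that the paper leaves implicit.
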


\begin{proof}
    Let $Q$ be the set of states of $\A$.
    We let $m = |\xi|\cdot|Q|^2+1$ and
    we construct the $m$-bounded behaviour automaton of $\A$ and call it $\A'$.
    By Observation~\ref{obs:behaviour-automaton-size},
    the size of $\A'$ is polynomial in $m$ and the size of $\A$.
    Clearly, the construction of $\A'$ has the same time bound.
    Now, by Lemma~\ref{lem:fta-epsilon-free},
    there exists an $\varepsilon$-transition free FTA $\A''$ with $\L(\A'') = \L(\A')$.
    Moreover, $\A''$ can be computed in polynomial time wrt.\@ the size of $\A'$.
    We can compute whether $\xi \in \L(\A'')$
    in polynomial time wrt.\@ the size of $\A''$ and $\xi$
    (cf., e.g., \cite[Theorem 1.7.3]{tata}),
    and, thus, in polynomial time wrt.\@ $m$.
    Finally, it holds that 
    \begin{equation*}
        \xi \in \L(\A) \overset{\text{L.\,\ref{lem:behaviour-automaton-language}}}\iff
        \xi \in \L(\A') \overset{\text{\cite[Th.\,1.1.5]{tata}}}\iff \xi \in \L(\A''). \tag*{\qed}
    \end{equation*}
\end{proof}

\section{Conclusion}

We have introduced global one-counter tree automata (GOCTA)
and shown that they extend the capabilities of one-counter tree automata (OCTA)
in certain ways, but are less expressive in other regards.
We have shown that the membership problem of GOCTA is in P
while, in contrast to expectations raised by literature on global index grammars,
their emptiness problem is undecidable.

Future work should investigate closure properties
of the class of tree languages recognizable by GOCTA.
While closure under Boolean operations will likely follow the string case,
i.e., closure under union holds but closure under intersection does not,
closure under different restrictions of (inverse) tree homomorphisms is an interesting topic for further investigations.

\subsubsection*{Acknowledgments.} This work was supported by the European Research Council through the ERC Consolidator Grant No. 771779 (DeciGUT).

We thank Johannes Osterholzer and Sebastian Rudolph
for their valuable ideas which
have pointed us into the right direction to obtain several of the results in this work.

\printbibliography

\appendix

\section{Global One-Counter Tree Automata (Section 3)}

Let $\A$ be an OCTA. We say that $\A$ is \emph{0-accepting} if, for each symbol $\alpha\in\Sigma^{(0)}$, each of $\A$s read-transitions for $\alpha$ is of the from $q\xrightarrow{0\slash 0}\alpha$ for some state $q\in Q$, i.e., the counter has to be emptied before $\A$ reads a leaf node. Moreover, we call $\A$ \emph{normalized} if $\textsc{Instr}(\A)\subseteq\{-1,0,1\}$ and each of its read-transitions (except those one reading leaf nodes) is of the form $q\xrightarrow{\top\slash 0} \sigma(p_1,\ldots, p_n)$.

\begin{observation}
    Given an OCTA $\A$ there is a 0-accepting and normalized OCTA $\A'$ such that $\L(\A)=\L(\A')$.
\end{observation}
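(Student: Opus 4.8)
The plan is to proceed in two phases, mirroring the two independent requirements on $\A'$. In the first phase I normalize the counter operations, and in the second I make the automaton $0$-accepting. Let $\A=(Q,\Sigma,q_0,\Delta)$ be the given OCTA.

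\emph{Phase 1 (normalization).} For every transition whose instruction $\z$ satisfies $|\z|>1$, I replace it by a chain of $|\z|$ consecutive $\varepsilon$-transitions through fresh intermediate states, each carrying instruction $+1$ (if $\z>0$) or $-1$ (if $\z<0$); the original predicate $\p$ is attached only to the first transition of the chain, so that it is tested on the counter value \emph{before} any increment or decrement is applied. For a non-leaf read-transition $q\xrightarrow{\p\slash \z}\sigma(p_1,\dots,p_n)$ I additionally route the chain into a fresh state $q'$ and let $q'\xrightarrow{\top\slash 0}\sigma(p_1,\dots,p_n)$ perform the actual reading. Thus after Phase~1 every instruction lies in $\{-1,0,1\}$ and every non-leaf read-transition has the required form $q\xrightarrow{\top\slash 0}\sigma(\dots)$. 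Observe that a chain of $-1$-steps blocks precisely when the counter would drop below $0$, which matches the original side condition $m+\z\ge 0$ exactly, so applicability is preserved.

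\emph{Phase 2 ($0$-acceptance).} After Phase~1 the only remaining offending transitions are the leaf read-transitions $q\xrightarrow{\p\slash \z}\alpha$ with $\alpha\in\Sigma^{(0)}$ and $\z\in\{-1,0,1\}$. For each such transition I introduce a fresh state $q_\alpha$ and replace the transition by the $\varepsilon$-transition $q\xrightarrow{\p\slash \z}q_\alpha$ (which still enforces $\p(m)$ and $m+\z\ge 0$), a self-loop $q_\alpha\xrightarrow{{>}0\slash -1}q_\alpha$ that drains the counter, and the leaf read-transition $q_\alpha\xrightarrow{0\slash 0}\alpha$. The only transition that now reads $\alpha$ is of the form $q_\alpha\xrightarrow{0\slash 0}\alpha$, so $\A'$ is $0$-accepting; since this step adds only $\varepsilon$-transitions and leaf reads with instruction $0$, it leaves the Phase-1 normalization intact.

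The correctness claim $\L_\cp(\A)=\L_\cp(\A')$ I would establish by exhibiting, in both directions, a simulation between $\Rightarrow_\cp$-computations: a single original step is matched by the corresponding $\varepsilon$-chain followed (for reads) by the $\top\slash 0$ read, and conversely every maximal run through the fresh intermediate states collapses back to one original step. The crucial point---and the place where the argument for OCTA is in fact easier than the single-counter GOCTA case of Lemma~\ref{lem:0-accepting}---is that under the copying semantics $\Rightarrow_\cp$ each branch carries its own independent copy of the counter, which is simply discarded once its leaf is read. Hence draining the counter to $0$ at a leaf is always possible and can never interfere with the counter values of sibling branches, so no global coordination (such as marking the rightmost path) is required. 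I expect the main obstacle to be the bookkeeping: verifying that the decomposition of a negative instruction reproduces the side condition $m+\z\ge 0$ exactly and that the predicate is genuinely evaluated on the pre-instruction counter value, so that the fresh intermediate states introduce no spurious computations.
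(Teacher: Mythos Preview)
Your proposal is correct; the paper states this result as an observation without proof, and your two-phase construction is exactly the standard folklore argument that justifies treating it as such. Your remark that the copying semantics makes $0$-acceptance easier than in Lemma~\ref{lem:0-accepting}---every leaf may drain its own counter copy independently, so no rightmost-path bookkeeping is needed---is precisely the point.
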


\begin{lemma}\label{lem:octa-bound}
    Let $\Sigma$ be a ranked alphabet such that $\Sigma^{(3)}=\{\sigma\}$ and $\Sigma^{(i)}=\emptyset$ for each $i\in\Nat\setminus\{0,1,3\}$. Let $T_{\Sigma,\sigma}$ be the set of all trees over $\Sigma$ where $\sigma$ occurs only at the root and let $L\subseteq T_{\Sigma,\sigma}$. If $L$ is OCTA-recognizable then there is an OCTA $\A$ with $\L(\A)=L$ such that each $\xi=\sigma(\xi_1,\xi_2,\xi_3)\in L$ is recognized by a computation of the form 
    \[(q_0,0)\Rightarrow^*(q,m)\Rightarrow\sigma((q_1,m),(q_2,m),(q_3,m))\Rightarrow^*\xi\]
    where $m\leq\height(\xi)\cdot |Q|^{4}+1$ with $Q$ being the set of states of $\A$.
\end{lemma}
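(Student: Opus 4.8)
The plan is to adapt the cycle-cutting argument of Lemma~\ref{lem:polybound} to the branching semantics of OCTA, exploiting the very restricted shape of trees in $T_{\Sigma,\sigma}$. First I would invoke the preceding Observation to assume that $\A$ is normalized and $0$-accepting. Normalization drives the whole argument through two consequences: inner read-transitions have the form $q\xrightarrow{\top\slash 0}\sigma(\dots)$ and hence leave the counter unchanged, so every counter change happens inside a maximal block of $\varepsilon$-transitions (a \emph{gap}); and $0$-accepting forces the counter to be $0$ exactly when a leaf is read. Since $\sigma$ occurs only at the root, every successful computation on $\xi=\sigma(\xi_1,\xi_2,\xi_3)$ has precisely the claimed shape: a pure-$\varepsilon$ \emph{ascent} $(q_0,0)\Rightarrow^*(q,m)$, the counter-neutral branching step, and then three \emph{independent} descents on the monadic subtrees $\xi_1,\xi_2,\xi_3$, each running from $m$ down to $0$. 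If $m\le\height(\xi)\cdot|Q|^4+1$ we are done; otherwise I aim to produce a computation on the same $\xi$ with strictly smaller $m$ and iterate.

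Because all three subtrees start from the shared value $m$, lowering $m$ by some $d$ requires removing an ascending $\varepsilon$-cycle of net $+d$ from the pre-branching part \emph{and simultaneously} a descending $\varepsilon$-cycle of net $-d$, each confined to a single gap, from each of the three descents. To locate four such mutually compatible cycles I would set up ladders along counter levels: for each level $v$ let $P(v)$ be the state at the last visit of the ascent to $v$, and let $F_i(v)$ be the state at the first visit of the $i$-th descent to $v$. As reads are counter-neutral, the levels $\{0,\dots,m\}$ are, for each descent $i$, partitioned into at most $\height(\xi)$ contiguous blocks according to which gap first reaches them; passing to the common refinement of these three partitions yields $O(\height(\xi))$ blocks, on each of which all four branches traverse the corresponding levels inside a single gap. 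On the largest such block I would apply the pigeonhole principle to the $|Q|^4$ possible $4$-tuples $(P(v),F_1(v),F_2(v),F_3(v))$: if the block is wider than $|Q|^4$, two levels $v<v'$ share a tuple. Careful bookkeeping of the number of blocks and their total width is what yields the stated bound $\height(\xi)\cdot|Q|^4+1$.

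Given such $v<v'$ with $d=v'-v$, I would excise the four cycles at once. In each branch the part above the band is shifted down by $d$ while the part below the band is left unchanged, so each descent still ends at $0$ and the ascent now reaches $(q,m-d)$; crucially no read-transition is touched, so the recognized tree is still $\xi$. Validity of the shifted parts is checked exactly as in the two properties established inside the proof of Lemma~\ref{lem:polybound}: choosing the band well above level $0$ guarantees that the shifted counter values stay positive, so every ${>}0$-test still succeeds and, since the only $0$-tests sit at the leaves (which lie below the band), none is triggered prematurely. Repeating this reduction drives $m$ below $\height(\xi)\cdot|Q|^4+1$.

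The main obstacle is precisely this four-fold coordination: unlike the single-thread situation of Lemma~\ref{lem:polybound}, one has to decrease the one shared value $m$ by a single amount $d$ that is simultaneously absorbable by the ascent and by all three descents, with every excised cycle kept within a single gap so that no symbol is lost. Setting up the pigeonhole over $Q^4$ on a common block of levels, and verifying that the high-band cuts keep all counter predicates satisfied across all four branches, is the delicate part; the remaining steps are routine.
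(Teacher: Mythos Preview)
Your overall strategy is sound and essentially the same cycle-cutting idea as in the paper: locate synchronized state-cycles in the ascent and in all three descents, excise them, and shift the ``upper'' parts down by a common amount $d$. Your verification that the first-reach-gap map on levels $\{0,\dots,m\}$ is monotone (hence gives contiguous blocks), and that the shifted segments keep all ${>}0$/$\top$ predicates valid while no $0$-predicate can occur above the band, is exactly the kind of reasoning the paper uses.

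The one genuine difference is the \emph{synchronization mechanism}. You align the four ladders by \emph{absolute counter level}: you need a band $[v,v']$ that lies in a single gap for every descent, which forces the common-refinement argument. The paper instead aligns by a \emph{relative index} $j\in[|Q|^4+1]$: for each branch it independently picks one gap in which the counter changes by more than $|Q|^4$ (this exists since $m>\height(\xi_i)\cdot|Q|^4+1$), sets up a ladder of $|Q|^4+1$ consecutive levels there, and then pigeonholes on the tuples $(p_0^j,p_1^j,p_2^j,p_3^j)$. The four ladders may sit at completely different absolute levels; cutting between $j_1<j_2$ removes a segment of net effect $\pm(j_2-j_1)$ from every branch simultaneously, with no refinement needed.

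This buys the paper the exact stated constant. Your refinement of three interval partitions, each with at most $\height(\xi)$ pieces, can have up to $3\height(\xi)-2$ pieces, so the largest block is only guaranteed to have width $\gtrsim m/(3\height(\xi))$; the bound you would actually obtain is of the order $3\height(\xi)\cdot|Q|^4$, not $\height(\xi)\cdot|Q|^4+1$. That is perfectly adequate for the downstream use in Lemma~\ref{lem:lab-not-octa} (any polynomial bound suffices there), but your sentence ``careful bookkeeping \dots\ yields the stated bound'' is not justified by the refinement argument as written. If you want the stated constant, switch to the paper's index-based synchronization; otherwise, state the slightly weaker bound your method gives.
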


\begin{proof}
    The proof is very similar to the full proof of Lemma \ref{lem:polybound} and, thus, presented with less details. Let $\A=(Q,\Sigma,q_0,\Delta)$ be an OCTA recognizing $L$. We may assume w.l.o.g. that $\A$ is 0-accepting and normalized.
    Let $\xi=\sigma(\xi_1,\xi_2,\xi_3)\in L$ and assume a computation $\theta$ of the form
    \begin{align*}
       (q_0,0) \Rightarrow^*(q,m)\Rightarrow\sigma((q_1,m),(q_2,m),(q_3,m))\Rightarrow^*\xi
    \end{align*} 
    of $\A$ where $m>\height(\xi)\cdot |Q|^{4}+1$. For each $i\in[3]$ assume $\xi_i=\gamma_{i,1}\ldots\gamma_{i,l_i}(\alpha_i)$ for some $l_i\in\Nat$ and symbols $\gamma_{i,1},\ldots,\gamma_{i,l_i},\alpha_i\in\Sigma$.  Consider the subcomputation $\theta_i=((q_i,m)\Rightarrow^*\xi_i)$. As $\A$ is 0-accepting, $\theta_i$ is of the form
    \begin{align*}
        (q_i,m)&\Rightarrow^*(q_i^1,m_i^1)\Rightarrow\gamma_{i,1}[(\hat{q}_i^1,m_i^1)]\\
        &\Rightarrow^*\gamma_{i,1}[(q_i^2,m_i^2)]\Rightarrow\gamma_{i,1}\gamma_{i,2}[(\hat{q}_i^2,m_i^2)]\\
        &\qquad \vdots\\
        &\Rightarrow^*\gamma_{i,1}\ldots\gamma_{i,l_i}[(q_i^{l_i+1},0)]\Rightarrow\gamma_{i,1}\ldots\gamma_{i,l_i}(\alpha_i).
    \end{align*}
    As $m>\height(\xi_i)\cdot|Q|^4+1$, there is a $u_i\in[l_i]$ with $m_i^{u_i}-m_i^{u_i+1}>|Q|^4$; we choose the smallest such occurrence. 
    As $\textsc{Instr}(\A)\subseteq\{-1,0,1\}$, there are $m_i > |Q|^4$, $m_i'=m_i-(|Q|^4+1)$ and $f_i^1,f_i^2\in Q$ such that there exists a subcomputation of $\theta_i$ of the form \[\hat\theta_i\colon t_i[(\hat{q}_i^{u_i},m_i^{ u_i})]\Rightarrow^*t_i[(f_i^1,m_i)]\Rightarrow^*t_i[(f_i^2,m_i')]\Rightarrow^*t_i[(q_i^{u_i+1},m_i^{u_i+1})]\]
    where $t_i=\gamma_{i,1}\ldots\gamma_{i,u_i}$ and we choose the first such occurrences.

    With the same argumentation as above, there are $m_0,m_0'\leq m$ with $m_0'=m_0-(|Q|^4+1)$ and $f_0^1,f_0^2\in Q$ such that 
    \[(q_0,0) \Rightarrow^*(f_0^1,m_0')\Rightarrow^*(f_0^2,m_0)\Rightarrow^*(q,m)\]
    and we choose the last such occurence.

    In combination, we obtain
    \begin{align*}
        (q_0,0) &\Rightarrow^*(f_0^1,m_0')\Rightarrow^*(f_0^2,m_0)\Rightarrow^*(q,m)\\
                    &\Rightarrow\sigma((q_1,m),(q_2,m),(q_3m))\\
                    &\Rightarrow^*\sigma(t_1[(f_1^1,m_1)],t_2[(f_2^1,m_2)],t_3[(f_3^1,m_3)])\\
                    &\Rightarrow^*\sigma(t_1[(f_1^2,m_1')],t_2[(f_2^2,m_2')],t_3[(f_3^2,m_3')])\\
                    &\Rightarrow^*\xi.
    \end{align*}

    Now let, for each $j\in[|Q|^4+1]$, $p_0^j\in Q$ such that $(f_0^1,m_0')\Rightarrow^*(p_0^j,m_0'+j)\Rightarrow^*(f_0^2,m_0)$ in $\theta$ and $m_0'+j$ occurs for the last time in this subcomputation. Similarly, let $p_i^j\in Q$ such that $t_i[(f_i^1,m_i)]\Rightarrow^*t_i[(p_i^j,m_i-j)]\Rightarrow^*t_i[(f_i^2,m_i')]$ and $m_i-j$ occurs for the first time in this subcomputation.

    Consider $Z=\{(p_0^j,p_1^j,p_2^j,p_3^j)\mid j\in[|Q|^4+1]\}$. Clearly, $|Z|\leq |Q|^4$. Thus, there are $j_1< j_2\in[|Q|^4+1]$ with $p_i^{j_1}=p_i^{j_2}$ for each $i\in\{0,1,2,3\}$.

    But this means that we can find a shorter computation of $\A$ for $\xi$ by cutting out the subcomputations between $p_i^{j_1}$ and $p_i^{j_2}$ at the respective position indicated above. Note that by selecting cutting points as close as possible to the chosen occurrence of $(q,m)$, we ensure that the counter values in the subcomputations between the cuts always stay above $|Q|^4 + 1$ and, thus, all transitions that were applicable before the cuts, are still applicable afterwards. For more details we refer to the proof of Lemma \ref{lem:polybound}. Now let $d=j_2-j_1$. Clearly, $(q_0,0) \Rightarrow^*(f_0^1,m_0')\Rightarrow^*(p_0^{j_1},m_0'+j_1)\Rightarrow^*(f_0^2,m_0-d)\Rightarrow^*(q,m-d)$. Moreover, we also obtain that $(q,m-d)\Rightarrow\sigma((q_1,m-d),(q_2,m-d),(q_3,m-d))$  and $(q_i,m-d)\Rightarrow^*t_i[(f_i^1,m_i-d)]\Rightarrow^*t_i[(p_i^{j_2},m_i-d)]$. Finally, as $(p_i^{j_2},m_i-d)\Rightarrow^*(f_i^2,m_i')$, we obtain $(q_0,0)\Rightarrow^*(q,m-d)\Rightarrow\sigma((q_1,m-d),(q_2,m-d),(q_3,m-d))\Rightarrow^*\xi$.\qed
\end{proof}

\section{Expressiveness (Section 4)}

We define \emph{one-counter (string) automata} (OCA)
as a special case of GOCTA where
a GOCTA $\A = (Q, \Sigma, q_0, \Delta)$ is an OCA
if $\max \rk(\Sigma) = 1$.
Since, then, the right-hand sides of the transitions of $\A$ are monadic trees,
we use our convention to denote them as strings.
It is well-known that the class of languages recognizable by OCA
is a subclass of the class of context-free languages.

\medskip

\noindent
\textbf{Lemma\;\ref{lem:lasbc-not-gocta}.}
\textit{The tree language $L_{a\sigma{}bc}$ is not recognizable by a GOCTA.}

\begin{proof}
    Assume that there exists a normalized and $0$-accepting GOCTA
    $\A = (Q, \Sigma, q_0, \Delta)$
    such that $L(\A) = L_{a\sigma{}bc}$.
    We observe that, for every $\xi \in L(\A)$ and $t \in \compA{\xi}$ of the form $(q_0,0) \Rightarrow^{\theta} (\xi,0)$,
    there exist $\theta_a, \theta_b, \theta_c \in \Delta^*$,
    $q_1, q_2, q_3, q_4, q_5 \in Q$ such that 
    \begin{equation}\label{eq:lgamma3-comp}
        \theta = \theta_a \; (q_1 \to \sigma(q_2, q_3)) \; \theta_b \; (q_4 \to \#) \; \theta_c \; (q_5 \to \#)
    \end{equation}
    and, for each $v \in \{a,b,c\}$ and read-transition $\tau$ occurring in $\theta_v$,
    we have that $\tau$ is of the form $q \to v(q')$.
    Thus, again without loss of generality, we assume that
    there exist pairwise disjoint $Q_a, Q_b, Q_c \subseteq Q$
    such that $Q = Q_a \cup Q_b \cup Q_c$ and,
    for every $\xi \in L_{a\sigma{}bc}$ and $\theta \in \compA{\xi}$, the following holds.
    If $\theta$ has form $\eqref{eq:lgamma3-comp}$, then,
    for each $v \in \{a,b,c\}$,
    every transition occurring in $\theta_v$ contains only states from $Q_v$
    (thus, in particular, $q_0, q_1 \in Q_a$, $q_2, q_4 \in Q_b$, and $q_3, q_5 \in Q_c$).
    We assume that every transition in $\Delta$ occurs in a computation of some $\xi \in L(\A)$,
    hence $\Delta$ contains no other kinds of read-transitions.

    We define the OCA $\A' = (Q', \Sigma', q_0', \Delta')$
    and the mapping $\varphi\colon \Delta' \to \Delta$ as follows.
    \begin{itemize}
        \item $Q' = Q_a \cup (Q_b \times Q_c) \cup Q_c$.
        \item $\Sigma' = \Sigma$ (ignoring the rank of each symbol).
        \item $q_0' = q_0$.
        \item For every $\tau = (q_1 \to \sigma(q_2, q_3))$ in $\Delta$,
            we let $\tau' = (q_1 \to \sigma  \langle q_2,q_3 \rangle $
            where $ \langle q_2, q_3 \rangle  \in Q_b \times Q_c$;
            then $\tau' \in \Delta'$ and $\varphi(\tau') = \tau$.
        \item For every $\tau = (q \to \#)$ in $\Delta$,
            if $q \in Q_c$,
            we let $\tau' = (q \to \#)$, $\tau' \in \Delta'$,
            and $\varphi(\tau') = \tau$;
            otherwise $q \in Q_b$, then, for every $q' \in Q_c$,
            we let $\tau_{q'} = ( \langle q,q' \rangle  \to \# q')$, $\tau_{q'} \in \Delta'$,
            and $\varphi(\tau_{q'}) = \tau$.
        \item For every $v \in \{a,b,c\}$ and $\tau = (q \to v(q'))$ in $\Delta$,
            if $q \in Q_a \cup Q_c$,
            we let $\tau' = (q \to v q')$, $\tau' \in \Delta'$,
            and $\varphi(\tau') = \tau$;
            otherwise $q \in Q_b$, then, for every $q'' \in Q_c$,
            we let $\tau_{q''} = ( \langle q,q'' \rangle  \to v  \langle q',q'' \rangle )$, $\tau_{q''} \in \Delta'$,
            and $\varphi(\tau_{q''}) = \tau$.
        \item Every other $\tau \in \Delta$ is an $\varepsilon$-transition,
            i.e., $\tau$ is of the form $q \xrightarrow{\p \slash \z} q'$.
            Now, if $q \in Q_a \cup Q_c$,
            we let $\tau' = (q \xrightarrow{\p \slash \z} q')$, $\tau' \in \Delta'$,
            and $\varphi(\tau') = \tau$;
            otherwise $q \in Q_b$, then, for every $q'' \in Q_c$,
            we let $\tau_{q''} = ( \langle q,q'' \rangle  \xrightarrow{\p \slash \z}  \langle q',q'' \rangle )$,
            $\tau_{q''} \in \Delta'$, and $\varphi(\tau_{q''}) = \tau$.
    \end{itemize}
    We lift $\varphi$ to a mapping $\varphi\colon \Delta'^* \to \Delta^*$
    in the obvious way.
    Moreover, we define the mapping $h\colon \Sigma \to (\Sigma \cup X_k)^*$
    by $h(\#) = \#$, $h(\sigma) = \sigma x_1 x_2$, and, for each $v \in \{a,b,c\}$,
    $h(v) = v x_1$;
    we lift $h$ to trees in the obvious way.

    We show that, for every $w \in L(\A')$ and $t' \in \compAp{w}$ of the form $(q_0',0) \Rightarrow^{\theta'} (w,0)$,
    there exists $\xi \in L(\A)$ and $t \in \compA{\xi}$
    such that $t$ is of the form $(q_0,0) \Rightarrow^{\varphi(\theta')} (\xi,0)$ and $h(\xi) = w$.
    Moreover, for every $\xi \in L(\A)$ and $t \in \compA{\xi}$,
    there exists $t' \in \compAp{h(\xi)}$ of the form $(q_0',0) \Rightarrow^{\theta'} (\xi,0)$
    such that $\varphi(\theta') = \theta$.

    Let $w \in L(\A')$ and $t' \in \compAp{w}$ of the form $(q_0',0) \Rightarrow^{\theta'} (w,0)$.
    By definition of $\A'$,
    there exist $\theta'_a, \theta'_b, \theta'_c \in \Delta'^*$ and
    $q_1, q_2, q_3, q_4, q_5 \in Q$ such that 
    \begin{equation}\label{eq:lgamma3string-comp}
        \theta' = \theta'_a \; (q_1 \to \sigma  \langle q_2, q_3\rangle) \; \theta'_b \; ( \langle q_4,q_3 \rangle  \to \# q_3 ) \; \theta'_c \; (q_5 \to \#),
    \end{equation}
    for each $v \in \{a,c\}$, every read-transition occurring in $\theta'_v$
    is of the form $q \to v q'$ with $q, q' \in Q_v$
    and
    every $\varepsilon$-transition occurring in $\theta'_v$ contains only states from $Q_v$,
    every read-transition occurring in $\theta'_b$
    is of the form $ \langle q,q_3 \rangle  \to b  \langle q',q_3 \rangle $ with $q, q' \in Q_b$,
    and
    every $\varepsilon$-transition occurring in $\theta'_b$ contains only states
    of the form $ \langle q,q_3 \rangle $ with $q \in Q_b$.
    Let $\theta$ denote $\varphi(\theta')$;
    clearly, $\theta \in \Delta^*$ and $\theta$ is of the form~\eqref{eq:lgamma3-comp}.

    Since $(q_0,0) \Rightarrow^{\theta'} w$,
    one can verify by induction that there exists $\xi \in L(\A)$ with
    $(q_0,0) \Rightarrow^{\theta} \xi$ and $h(\xi) = w$.
    First, we remark that, for every transition $\tau$ occurring in $\theta'$,
    the counter behaviour of $\tau$ is identical to that of $\varphi(\tau)$.
    Moreover, if $\tau$ occurs in $\theta'_a$, $\theta'_c$, or
    it is the last transition $q_5 \to \#$,
    then also their state behaviour is identical.
    Otherwise, we note that exactly one transition of the form
    $\tau_\sigma = q_1 \to \sigma  \langle q_2, q_3 \rangle $ occurs in $\theta'$.
    Hence there exist $c \in \Nat$ and $\zeta \in C_\Sigma(X_2)$ such that
    $(q_0,0) \Rightarrow^{\varphi(\theta'_a \tau_\sigma)} (\zeta[q_2,q_3], c)$.
    Then, $\theta'_b$ keeps the second component of the state tuple as $q_3$
    while $\varphi(\theta'_b)$, since it is a left-most computation,
    changes only the first state, too, i.e.,
    there exist $c' \in \Nat$ and $\zeta' \in C_\Sigma(X_1)$ such that
    $(q_0,0) \Rightarrow^{\varphi(\theta'_a \tau_\sigma \theta'_b ( \langle q_4,q_3 \rangle  \to \# q_3))} (\zeta'[q_3], c')$.
    Thus, and since $(\zeta'[q_3], c') \Rightarrow^{\varphi(\theta'_c) (q_5 \to \#)} \xi$, there exists $\xi \in L(\A)$ such that $(q_0,0) \Rightarrow^{\theta} \xi$.
    Moreover, we note that every transition $\tau$ occurring in $\theta$
    has the same symbol from $\Sigma$ as $\varphi(\tau)$.
    Since $(q_0,0) \Rightarrow^{\theta} \xi$ is a left-most computation
    and $h$ essentially reads off the symbols from $\xi$ in lexicographical order,
    we obtain $w = h(\xi)$.

    Now we let $\xi \in L(\A)$ and $t \in \compA{\xi}$.
    We recall that $t = (q_0,0) \Rightarrow^\theta (\xi,0)$ with $\theta$ as in~\eqref{eq:lgamma3-comp}.
    We define $\theta' \in \Delta'^{|\theta|}$ such that, for each $i \in [|\theta'|]$
    and $q, q' \in Q$,
    \begin{itemize}
        \item if $\theta[i] = (q_1 \to \sigma(q_2, q_3))$,
            we let $\theta'[i] = (q_1 \to \sigma  \langle q_2,q_3 \rangle )$,
        \item if $\theta[i] = (q_5 \to \#)$,
            we let $\theta'[i] = (q_5 \to \#)$,
        \item if $\theta[i] = (q_4 \to \#)$,
            we let $\theta'[i] = ( \langle q_4,q_3 \rangle  \to \# q_3)$
        \item if $\theta[i] = (q \to a(q'))$,
            we let $\theta'[i] = (q \to a q')$,
        \item if $\theta[i] = (q \to b(q'))$,
            we let $\theta'[i] = ( \langle q,q_3 \rangle  \to b  \langle q',q_3 \rangle )$,
        \item if $\theta[i] = (q \to c(q'))$,
            we let $\theta'[i] = (q \to c q')$,
        \item if $\theta[i] = q \xrightarrow{\p \slash \z} q'$
            and $q \in Q_a \cup Q_c$,
            we let $\theta'[i] = (q \xrightarrow{\p \slash \z} q')$, and
        \item if $\theta[i] = q \xrightarrow{\p \slash \z} q'$
            and $q \in Q_b$,
            we let $\theta'[i] = ( \langle q,q_3 \rangle  \xrightarrow{\p \slash \z}  \langle q',q_3 \rangle )$.
    \end{itemize}
    Then $\theta'$ has form \eqref{eq:lgamma3string-comp} and $\varphi(\theta') = \theta$.
    One can verify by induction that $(q_0,0) \Rightarrow^{\theta'} h(\xi)$
    where the argumentation is essentially the same as in the other direction.

    To finish the proof, we note that $h(L_{a\sigma{}bc}) = L$.
    We show that $L = L(\A')$.
    First, let $w \in L$.
    Then there exists $\xi \in L_{a\sigma{}bc}$ such that $w = h(\xi)$.
    Since $L(\A) = L_{a\sigma{}bc}$, there exists $(q_0,0) \Rightarrow^\theta (\xi,0)$ in $\compA{\xi}$.
    We have shown that there exists $(q_0',0) \Rightarrow{\theta'} (w,0)$ is in $\compAp{w}$;
    thus, $w \in L(\A')$.
    Now let $w \in L(\A')$.
    Then there exists $(q_0',0) \Rightarrow^{\theta'} (w,0)$ in $\compAp{w}$.
    We have shown that
    there exists $\xi \in L(\A)$ such that $(q_0,0) \Rightarrow^{\varphi(\theta')} (\xi,0)$ is in $\compA{\xi}$
    and $w = h(\xi)$.
    Then, since $L(\A) = L_{a\sigma{}bc}$,
    we have $\xi \in L_{a\sigma{}bc}$ and hence $w = h(\xi) \in L$.

    We have shown that the OCA $\A'$ recognizes $L$.
    However, since $L$ is not context-free, this is a contradiction.
    Therefore, $\A$ does not recognize $L_{a\sigma{}bc}$.\qed
\end{proof}

\subsubsection{Lemma\;\ref{lem:lab-not-octa}.}
$L_{a=b}$ cannot be recognized by OCTA.

\begin{proof}
    Assume that there exists an OCTA $\A = (Q, \Sigma, q_0, \Delta)$ with $L(\A) = L_{a=b}$.
    We can assume that there exist $Q_1, Q_2 \subseteq Q$ such that $Q_1 \cup Q_2 = Q$, $Q_1 \cap Q_2 = \emptyset$ and each computation of $\A$ has the form $(q_0,0) \Rightarrow_{\A}^* (q,m) \Rightarrow_{\A} \sigma((q_1,m), (q_2,m), (q_3,m)) \Rightarrow_{\A}^* \sigma(\xi_1,\xi_2,\xi_3)$ with $q_1,q_2,q_3 \in Q_2$.
    Moreover, for each $q \in Q_2$, no state in $Q_1$ is reachable.
    For every $n \in \Int$, we let $L_n = \{ w(\#) \mid w \in \{a,b\}^*, |w|_a = |w|_b + n\}$.
    Moreover, for every $c \in \Nat$ and $q \in Q$, we let $L_c^q = \{ \xi \mid (q,c) \Rightarrow_{\A}^* \xi \}$.

    \sepstars

    \noindent\textit{Property\;\ref{prop:lcq-subset-lm}.}
    For every $(q,c) \in \textsc{ID}$ with $q \in Q_2$ that occurs in a successful computation, there exists $n \in \Int$ such that $L_c^q \subseteq L_n$.

    In order to see that this property holds, let $(q,c) \in \textsc{ID}$ with $q \in Q_2$ such that $(q,c)$ occurs in a successful computation and,
    for every $m \in \Nat$, it holds that $L_c^q \not\subseteq L_m$.
    Then there exist $m_1, m_2 \in \Int$ with $m_1 \not= m_2$ and $w_1, w_2 \in L_c^q$ such that $|w_1|_a - |w_1|_b = m_1$ and $|w_2|_a - |w_2|_b = m_2$.
    Since $(q,c)$ occurs in a successful computation, there exists $\zeta \in C_\Sigma(X_1)$ such that $(q_0,0) \Rightarrow_{\A}^* \zeta[(q,c)] \Rightarrow_{\A}^* \zeta[w_1]$ and $(q_0,0) \Rightarrow_{\A}^* \zeta[(q,c)] \Rightarrow_{\A}^* \zeta[w_2]$.
    Clearly, $|\zeta[w_1]|_a - |\zeta[w_1]|_b \not= |\zeta[w_2]|_a - |\zeta[w_2]|_b$.
    Hence, $\zeta[w_1] \not\in L_{a=b}$ or $\zeta[w_2] \not\in L_{a=b}$ and, thus, $L(\A) \not= L_{a=b}$.

    \sepstars

    We observe that, for every $\xi \in L_{a=b}$, there exist unique $m_1, m_2, m_3 \in \Int$ such that $\xi \in \sigma(L_{m_1}, L_{m_2}, L_{m_3})$.
    Let $n \in \Nat$.
    Clearly, if $\height(\xi) \le n+1$, then $m_1, m_2, m_3 \le n$.
    We let $k \in \Nat$ and consider the set $F(k) = \{ (m_1, m_2, m_3) \mid \exists \xi \in L_{a=b}\colon \xi \in \sigma(L_{m_1}, L_{m_2}, L_{m_3}), |m_1|, |m_2|, |m_3| \le k \}$.
    We note that, for every $(m_1, m_2, m_3) \in F(k)$, there exists $\xi \in L_{a=b} \cap \sigma(L_{m_1}, L_{m_2}, L_{m_3})$ with $\height(\xi) \le k+1$.
    By definition of $L_{a=b}$ it holds that $F(k) = \{ (m_1,m_2,m_3) \mid m_1,m_2,m_3 \in \{ -k, \dots, +k \}, m_1 + m_2 + m_3 = 0 \}$.

    \sepstars

    \begin{property}\label{prop:f(k)-cardinality}
        $|F(k)| = 3k^2 + 3k + 1$
    \end{property}

    Let $k \in \Nat$ and $(m_1,m_2,m_3) \in F(k)$.
    As $m_1 \in \{-k, \dots, +k\}$, there are $2k + 1$ choices for $m_1$.
    We fix $m_1 = \ell_1$.
    If $\ell_1 \le 0$, then, for every $\ell_2 \in \{ -k - \ell_1, \dots, k \}$, there exists exactly one $\ell_3 \in \{ -k, \dots, k \}$ such that $\ell_1 + \ell_2 + \ell_3 = 0$ and, for every other $\ell_2$, there exists no such $\ell_3$.
    If $\ell_1 > 0$, then, for every $\ell_2 \in \{ -k, \dots, k - \ell_1 \}$, there exists exactly one $\ell_3 \in \{ -k, \dots, k \}$ such that $\ell_1 + \ell_2 + \ell_3 = 0$ and, for every other $\ell_2$, there exists no such $\ell_3$.
    Thus, in each case, there are $k + 1 - |\ell_1|$ choices for $m_2$ and there is one choice for $m_3$.
    In total, there are $\sum_{\ell_1 = -k}^k 2k + 1 - |\ell_1|$ choices for $(m_1,m_2,m_3)$.
    We obtain
    \begin{align*}
        |F(k)| &= \sum_{\ell_1 = -k}^k 2k + 1 - |\ell_1| \\
        &= (2k+1) \cdot (2k+1) + \sum_{\ell_1=-k}^k |-\ell_1| \\
        &= (2k+1) \cdot (2k+1) -2 \cdot \sum_{\ell_1=1}^k \ell_1 \\
        &= (2k+1) \cdot (2k+1) -2 \cdot \frac{k \cdot (k+1)}{2} \\
        &= 4k^2 + 4k + 1 - (k^2 + k) \\
        &= 3k^2 + 3k + 1.
    \end{align*}

    \sepstars
    
    By definition of $\A$, there exist $c \in \Nat$ and $q_1, q_2, q_3 \in Q_2$ such that $\xi \in \sigma(L_c^{q_1}, L_c^{q_2}, L_c^{q_3})$.
    Let $\xi, \xi' \in L_{a=b}$, $m_1, m_2, m_3, m_1', m_2', m_3' \in \Int$, $c, c' \in \Nat$, and $q_1, q_2, q_3, q_1', q_2', q_3' \in Q_2$ such that
    $\xi \in \sigma(L_{m_1}, L_{m_2}, L_{m_3})$, $\xi' \in \sigma(L_{m_1'}, L_{m_2'}, L_{m_3'})$ as well as
    $\xi \in \sigma(L_c^{q_1}, L_c^{q_2}, L_c^{q_3})$ and $\xi' \in \sigma(L_{c'}^{q_1'}, L_{c'}^{q_2'}, L_{c'}^{q_3'})$.
    By Property~\ref{prop:lcq-subset-lm}, the following holds:
    if $(m_1, m_2, m_3) \not= (m_1', m_2', m_3')$, then $(c, q_1, q_2, q_3) \not= (c', q_1', q_2', q_3')$.
    We let
    \[
        D(k) = \{ (c, q_1, q_2, q_3) \mid \begin{aligned}[t]
            &(q_0,0) \Rightarrow_{\A}^* (q,c) \Rightarrow_{\A} \sigma((q_1,c), (q_2,c), (q_3,c)) \Rightarrow_{\A}^* \xi, \\
            &\height(\xi) \le k+1 \};
        \end{aligned}
    \]
    thus, $|D(k)|$ must be at least $|F(k)|$.
    Hence, by Property~\ref{prop:f(k)-cardinality}, we obtain $|D(k)| \ge 3k^2 + 3k + 1$.
    Thus, there exists $(c, q_1, q_2, q_3) \in D(k)$ with $c \ge \frac{3k^2 + 3k + 1}{|Q|^3}$.
    However, by Lemma~\ref{lem:octa-bound}, each $\xi \in L(\A)$ can be recognized by a computation with maximal counter value $\height(\xi) \cdot |Q|^4 + 1$.
    Thus, for every $(c, q_1, q_2, q_3) \in D(k)$, we can assume $c \le (k+1) \cdot |Q|^4 + 1$.
    This is a contradiction for sufficiently large $k$.\qed
\end{proof}

\section{Decision Problems (Section 5)}

\subsection{Emptiness}

We recall the definition of an indexed counter grammar from \cite{DusMidPar92}. For the counter, we use two symbols not occuring as nonterminal or terminal symbols in the grammar: $\#$ as a bottom marker and $\gamma$ as the single counter symbol; we set $I=\{\gamma,\#\}$. 

An \emph{indexed counter grammar} is a tuple $G=(N\cup\{S\},T,P,S)$ where $N$ and $T$ are disjoint sets of nonterminal and terminal symbols, resp., $S \not \in N \cup T$ is the initial nonterminal symbol, and $P$ is a finite set of productions of the forms
\begin{itemize}
    \item $S\to A\#$ and $S\to\varepsilon$,\hfill$(S1)$ and $(S2)$
    \item $Ag\to u_1 B_1\beta_1\ldots u_n B_n\beta_n u_{n+1}$, or\hfill$(T1)$
    \item $A\#\to u_1 B_1\beta_1\# u_2 B_2\beta_2 \ldots u_n B_n\beta_n u_{n+1}$,\hfill$(T2)$
\end{itemize}
where $n\in\Nat$, $A\in N$, $g\in\{\gamma,\varepsilon\}$, $u_1,\ldots,u_{n+1}\in T^*$, and $B_i\in N$, $\beta_i\in I^*$ for each $i\in[n]$. Note that in productions of form $(T2)$, the bottom marker $\#$ does only occur once\footnote{This definition for the form of $T2$-productions fixes a small error in \cite{DusMidPar92} where $R$-mode derivations where given as an alternative to the usual semantics which copyies the pushdown and distributes it to each nonterminal in the right-hand side of a production. For the copying semantics, it was necessary for $\#$ to occur after each $\beta_i$.}.

The semantics of $G$ (via $R$-mode derivation) is defined as follows. Given $\Theta=u_1B_1\beta_1 u_2 B_2\beta_2 \ldots u_n B_n\beta_n u_{n+1}$ with $u_1,\ldots,u_{n+1}\in T^*$, $B_i\in N$, $\beta_i\in I^*$, and $\delta\in I^*$, we define
\[\Theta :_R\delta=u_1B_1\beta_1\delta u_2 B_2\beta_2 \ldots u_n B_n\beta_n u_{n+1}.\] Now let $\Theta',\Theta''\in(N I^*\cup T)^*$ and $\tau\in P$. Then we let $\Theta'\Rightarrow_R^\tau\Theta''$ if
\begin{itemize}
    \item $\tau=S\to A\#$, $\Theta'=S$ and $\Theta''=A\#$, or $\tau=S\to\varepsilon$, $\Theta'=S$, and $\Theta''=\varepsilon$, or
    \item $\tau=Ag\to \Theta$ of form $(T1)$, $\Theta'=wAg\omega\Theta_1$ with $w\in T^*$, $\omega\in\gamma^*\#$, and $\Theta_1\in(N I^*\cup T)^*$, and $\Theta''=w(\Theta\Theta_1:_R\omega)$, or
    \item $\tau=A\#\to \Theta$ of form $(T2)$, $\Theta'=wA\#\Theta_1$ with $w\in T^*$ and $\Theta_1\in(N I^*\cup T)^*$, and $\Theta''=w\Theta\Theta_1$.
\end{itemize}
We set $L_R(G)=\{w\in T^*\mid S\Rightarrow_R^*w\}$.

A derivation is a sequence $t=\Theta_0\Rightarrow^{\tau_1}\Theta_1\ldots\Rightarrow^{\tau_n}\Theta_n$ (sometimes abbreviated as $\Theta_0\Rightarrow^{\tau_1\ldots\tau_n}\Theta_n$) such that $n\in\Nat$, $\Theta_0,\ldots,\Theta_n\in(N I^*\cup T)^*$, $\tau_1,\ldots,\tau_n\in P$, and $\Theta_{i-1}\Rightarrow^{\tau_i}\Theta_i$ for each $i\in[n]$. We call $t$ \emph{successful on $w\in T^*$} if $\Theta_0=S$ and $\Theta_n=w$; the set of all successful computations of $G$ on $w$ is denoted by $\derivG{w}$. We set $\derivGall=\bigcup_{w\in T^*}\derivG{w}$. Clearly, $L_R(G)=\{w\in T^*\mid\derivG{w}\neq\emptyset\}$.


Next we show that the non-emptiness problem of indexed counter grammars can be reduced to the non-emptiness problem of GOCTA. For this, we construct, given an indexed counter grammar $G$, a GOCTA $\A$ that recognizes all \enquote{computation trees} of $G$.

\begin{lemma}
    For each indexed counter grammar $G$ there is a GOCTA $\A$ such that $L(\A)\neq\emptyset$ if and only if $L_R(G)\neq\emptyset$.
\end{lemma}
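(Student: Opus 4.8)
The plan is to represent each successful $R$-mode derivation of $G$ by a tree over a ranked alphabet whose symbols are the productions of $G$, and to construct a GOCTA $\A$ whose single counter simulates the global counter of $G$ and whose lexicographic traversal mimics the leftmost order of the derivation. The key observation making this possible is that, under $R$-mode, the index is always handed to the \emph{leftmost} remaining nonterminal of the sentential form: applying a production passes the surviving index $\omega$ to the first right-hand-side nonterminal, and a production whose right-hand side is purely terminal hands $\omega$ on to the next nonterminal in the sentential form. Hence the index visits the nonterminals in exactly the order in which a GOCTA moves its counter through the nodes of the corresponding derivation tree.

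Concretely, I would let $\Sigma$ contain, for every production $\tau\in P$, a symbol $[\tau]$ whose rank is the number of nonterminals on the right-hand side of $\tau$ (the emitted terminals $u_i$ are irrelevant for emptiness and are dropped). As states of $\A$ I would use pairs $\langle\beta,B\rangle$, meaning \enquote{expand $B$ after first pushing the local index $\beta$}, the push being realised by $|\beta|_\gamma$ incrementing $\varepsilon$-transitions that lead to $\langle\varepsilon,B\rangle$. The production $S\to A\#$ becomes a unary read-transition from $q_0$ to $\langle\varepsilon,A\rangle$ (the counter starts at $0$, matching the stack $\#$), and $S\to\varepsilon$ becomes a single nullary leaf. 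A production $Ag\to u_1B_1\beta_1\cdots u_nB_n\beta_n u_{n+1}$ yields a read-transition from $\langle\varepsilon,A\rangle$ reading $[\tau]$ and branching into $\langle\beta_1,B_1\rangle,\dots,\langle\beta_n,B_n\rangle$; popping $g=\gamma$ is modelled by predicate ${>}0$ with instruction $-1$, and $g=\varepsilon$ by $\top\slash 0$. A type-$(T2)$ production $A\#\to\cdots$ becomes a zero-tested transition (predicate $0$), which is sound because in $R$-mode the leftmost nonterminal carries the whole stack, so its index equals $\#$ precisely when the counter is $0$.

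The heart of the proof is the equivalence of computations, which I would establish by induction along the lexicographic order while maintaining the invariant that, whenever $\A$ is about to enter the subtree rooted at a node, its counter equals the value that the global counter of $G$ takes when the corresponding nonterminal becomes the leftmost one. For the forward direction, a successful derivation is read off as a derivation tree and the invariant shows that $\A$ accepts it; conversely, a successful computation of $\A$ on a tree $\xi$ spells out a production sequence whose leftmost application is a valid $R$-mode derivation of some $w\in T^*$. Crucially, no final zero-test is required, since $L_R(G)$ imposes no emptiness condition on the vanishing stack of the last nonterminal, so plain GOCTA acceptance is adequate. Combining the two directions gives $L(\A)\neq\emptyset \iff L_R(G)\neq\emptyset$.

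The step I expect to be the main obstacle is making the counter/flow correspondence precise: one must check that the moment the global index of $G$ first reaches the $i$-th child $B_i$ of a node coincides with the moment $\A$'s lexicographic traversal enters the $i$-th subtree, and that the delayed local push $\beta_i$ is then applied exactly there rather than at the creation of $B_i$. A related delicate point is the bottom marker $\#$: I must argue that the fresh $\#$ attached to the first child by a type-$(T2)$ production always sits at counter level $0$ and may therefore be identified with the global bottom, so that a single counter with no nested bottoms faithfully represents the stack; should some local index $\beta_i$ itself contain $\#$, this identification needs additional care.
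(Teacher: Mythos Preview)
Your proposal is correct and essentially coincides with the paper's construction: productions become ranked symbols, terminals are dropped, the lexicographic counter flow mirrors the leftmost $R$-mode derivation, and the predicates $0$, ${>}0$, $\top$ encode the left-hand-side markers $\#$, $\gamma$, $\varepsilon$. The paper uses auxiliary states $[B\beta]$ with a single $\varepsilon$-transition $[B\gamma^n]\xrightarrow{\top\slash n}B$ (respectively $[B\gamma^n\#]\xrightarrow{\top\slash n}B$) instead of your sequence of unit increments, and it factors the correctness through an intermediate terminal-free grammar $G'$ together with the sets $\derivGp{m_1,m_2}$, but the inductive argument and the treatment of the delayed push and of the bottom marker are the same as what you outline.
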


\begin{proof}
    Let $G=(N\cup\{S\},T,P,S)$ be an indexed counter grammar. For each $\tau\in P$, we denote by $\mathrm{rk}(\tau)$ the number of nonterminal symbols occurring in its right-hand side. Let $\Sigma$ be the ranked alphabet defined by $\Sigma^{(n)}=\{\tau\in P\mid\mathrm{rk}(\tau)=n\}$ for each $n\in\Nat$. 

    Now in a first step let us define an indexed counter grammar $G'=(N\cup\{S\},\emptyset,P',S)$ where $P'$ is constructed from $P$ by replacing each symbol occurrence from $T$ in the right-hand side of a transition by $\varepsilon$. Clearly, $\derivGp{\varepsilon}\neq\emptyset$ if and only if $\derivGall\neq\emptyset$. Moreover, for this particular grammar we define the set $\derivGp{m_1,m_2}$ of derivations computing $\varepsilon$ starting from the counter state $\gamma^{m_1}\#$ and ending in the counter state $\gamma^{m_2}\#$ (for $m_1$, $m_2\in\Nat$): a derivation $\Theta_0\Rightarrow^{\tau_1\ldots\tau_n}\Theta_n$ is in $\derivGp{m_1,m_2}$ if and only if 
    \begin{itemize}
        \item $\Theta_0= A\gamma^{m_1}\#\Theta'$,
        \item $\Theta_{n-1}=B\gamma^{m_2'}\#$ with $m_2'=m_2+1$ if $\tau_n=B\gamma\to\varepsilon$ and $m_2'=m_2$ otherwise, and
        \item $\Theta_n=\varepsilon$
    \end{itemize}
    for some $A,B\in N$ and $\Theta'\in (NI^*)^*$. 

We construct the GOCTA $\A=(Q,\Sigma,S,\Delta)$ where $Q=N\cup\{S\}\cup\{[B\omega]\mid Af\to\Theta_1 B\omega\Theta_2\in P, f\in\{\#,\gamma,\varepsilon\}, \Theta_1,\Theta_2\in (NI^*)^*\}$ and $\Delta$ consists of the following transitions:
\begin{itemize}
    \item If $\tau=S\to\varepsilon$ is in $P$, then $S\xrightarrow{\top\slash 0} \tau$ is in $\Delta$.
    \item If $\tau=S\to A\#$ is in $P$, then $S\xrightarrow{\top\slash 0}\tau(A)$ is in $\Delta$.
    \item If $\tau=Af\to B_1\beta_1\ldots B_n\beta_n $ is in $P$, then $A\xrightarrow{\p\slash \z}\tau([B_1\beta_1],\ldots,[B_n\beta_n])$ is in $\Delta$ where
    \begin{itemize}
        \item $\p=\top$ and $\z=0$ if $f=\varepsilon$,
        \item $\p={>} 0$ and $\z=-1$ if $f=\gamma$, and
        \item $\p=0$ and $\z=0$ if $f=\#$.
    \end{itemize}
    \item For each $[B\gamma^n\#]$ in $Q$ we let $[B\gamma^n\#]\xrightarrow{\top\slash n}B$ in $\Delta$.
    \item For each $[B\gamma^n]$ in $Q$ we let $[B\gamma^n]\xrightarrow{\top\slash n}B$ in $\Delta$.
\end{itemize}

Define $h\colon T_\Sigma\to P^*$ by $h(\tau(\xi_1,\ldots \xi_n))=\tau h(\xi_1)\ldots h(\xi_n)$. We will show that $\xi\in\L(\A)$ if and only if $(S\Rightarrow^{h(\xi)}\varepsilon)\in\derivGp{\varepsilon}$. As $L_R(G')\neq\emptyset$ iff $\derivGp{\varepsilon}\neq\emptyset$, we obtain $L_R(G')\neq\emptyset$ iff $\L(\A)\neq\emptyset$.

\sepstars

\begin{property}\label{prop:empt1}
    Let $k\in\Nat$, $\xi\in T_\Sigma$, $m_1,m_2\in\Nat$, $A\in N$, and $\theta\in\Delta^k$ such that $(A,m_1)\Rightarrow^\theta(\xi,m_2)$. Then $(A\gamma^{m_1}\#\Rightarrow^{h(\xi)}\varepsilon)\in\derivGp{m_1,m_2}$.
\end{property}

We prove Property \ref{prop:empt1} by strong induction on $k$. As there is no such computation of length $0$, the induction basis trivially holds.

Now let $k>0$ and assume that Property \ref{prop:empt1} holds for all $k'\in\Nat$ with $k'<k$. Let $\theta=\bar\tau\theta'$ with $\bar\tau=A\xrightarrow{\p\slash \z}\tau([B_1\beta_1],\ldots,[B_n\beta_n])$. Then $\xi=\tau(\xi_1,\ldots,\xi_n)$, $\theta'=\theta_1\ldots\theta_n$, and there are $s_0,\ldots,s_{n}\in\Nat$ such that $s_0=m_1+\z$, $s_n=m_2$, and $([B_i,\beta_i],s_{i-1})\Rightarrow^{\theta_i}(\xi_i,s_i)$ for each $i\in[n]$. By construction, $\theta_i=\tau_i\theta_i'$ with $\tau_i=[B_i\beta_i]\xrightarrow{\top\slash k_i}B$ if $\beta_i\in\{\gamma^{k_i},\gamma^{k_i}\#\}$. Thus, $([B_i,\beta_i],s_{i-1})\Rightarrow^{\tau_i}(B,s_{i-1}+k_i)\Rightarrow^{\theta_i'}(\xi_i,s_i)$. By induction hypothesis, $(B\gamma^{s_{i-1}+k_i}\#\Rightarrow^{h(\xi_i)}\varepsilon)\in\derivGp{s_{i-1}+k_i,s_i}$. Now we proceed with a case distinction on $\bar\tau$:

\emph{Case 1:} Let $\bar\tau=A\xrightarrow{\top\slash 0}\tau([B_1\beta_1],\ldots,[B_n\beta_n])$. Then, by construction, $\tau=A\to B_1\beta_1\ldots B_n\beta_n$ is a production in $P$ and $\beta_i=\gamma^{k_i}$ for each $i \in [n]$. As $\textsc{Instr}(\bar\tau)=0$, we have $s_0=m_1$. But then we obtain
\begin{align*}
    A\gamma^{m_1}\#&\Rightarrow^\tau(B_1\gamma^{k_1}\ldots B_n\gamma^{k_n}):_R(\gamma^{s_0}\#)=B_1\gamma^{s_0+k_1}\#\ldots B_n\gamma^{k_n}\\
    &\Rightarrow^{h(\xi_1)}\varepsilon(B_2\gamma^{k_2}\ldots B_n\gamma^{k_n}):_R(\gamma^{s_1}\#)=B_2\gamma^{s_1+k_2}\#\ldots B_n\gamma^{k_n}\\
    &\quad\vdots\\
    &\Rightarrow^{h(\xi_n)}\varepsilon
\end{align*}
which is a derivation in $\derivGp{m_1,m_2}$.

\emph{Case 2:} Let $\bar\tau=A\xrightarrow{{>} 0\slash -1}\tau([B_1\beta_1],\ldots,[B_n\beta_n])$. Then $m_1>0$ and $s_0=m_1-1\geq0$. By construction, $\tau=A\gamma\to B_1\beta_1\ldots B_n\beta_n$ is a production in $P$ and $\beta_i=\gamma^{k_i}$ for each $i \in [n]$. We obtain
\begin{align*}
    A\gamma^{m_1}\#&\Rightarrow^\tau(B_1\gamma^{k_1}\ldots B_n\gamma^{k_n}):_R(\gamma^{s_0}\#)=B_1\gamma^{s_0+k_1}\#\ldots B_n\gamma^{k_n}\\
    &\Rightarrow^{h(\xi_1)}\varepsilon(B_2\gamma^{k_2}\ldots B_n\gamma^{k_n}):_R(\gamma^{s_1}\#)=B_2\gamma^{s_1+k_2}\#\ldots B_n\gamma^{k_n}\\
    &\quad\vdots\\
    &\Rightarrow^{h(\xi_n)}\varepsilon
\end{align*}
which is a derivation in $\derivGp{m_1,m_2}$.

\emph{Case 3:} Let $\bar\tau=A\xrightarrow{0\slash 0}\tau([B_1\beta_1],\ldots,[B_n\beta_n])$. Then $m_1=s_0=0$. By construction, $\tau=A\#\to B_1\beta_1\#\ldots B_n\beta_n$ is a production in $P$ and $\beta_i=\gamma^{k_i}$ for each $i \in [n]$. We obtain
\begin{align*}
    A\#&\Rightarrow^\tau B_1\gamma^{k_1}\#\ldots B_n\gamma^{k_n}\\
    &\Rightarrow^{h(\xi_1)}\varepsilon(B_2\gamma^{k_2}\ldots B_n\gamma^{k_n}):_R(\gamma^{s_1}\#)=B_2\gamma^{s_1+k_2}\#\ldots B_n\gamma^{k_n}\\
    &\quad\vdots\\
    &\Rightarrow^{h(\xi_n)}\varepsilon
\end{align*}
which is a derivation in $\derivGp{0,m_2}$.

\sepstars

\begin{property}\label{prop:empt2}
    Let $k\in\Nat$, $m_1,m_2\in\Nat$, $A\in N$, and $t\in P^k$ such that $(A\gamma^{m_1}\#\Rightarrow^{t}\varepsilon)\in\derivGp{m_1,m_2}$. Then there are $\xi\in h^{-1}(t)$ and $\theta\in\Delta^*$ with $(A,m_1)\Rightarrow^{\theta}(\xi,m_2)$.
\end{property}

We prove Property \ref{prop:empt2} by strong induction on $k$. As there is no such derivation of length $0$, the induction basis trivially holds.

Now let $k>0$ and assume that Property \ref{prop:empt2} holds for all $k'\in\Nat$ with $k'<k$. Let $t=\tau t'$ with $\tau=Af\to B_1\beta_1\ldots B_n\beta_n $ and choose $k_1,\ldots,k_n\in\Nat$ such that $\beta_i\in\{\gamma^{k_i},\gamma^{k_i}\#\}$ for each $i\in[k]$. As $(A\gamma^{m_1}\#\Rightarrow^{\tau t'}\varepsilon)\in\derivGp{m_1,m_2}$, $t'=t_1\ldots t_n$ and there are $s_0,\ldots,s_{n}\in\Nat$ such that $s_0=m_1-1$ if $f=\gamma$ and $s_0=m_1$ otherwise, $s_n=m_2$, and $(B\gamma^{s_{i-1}+k_i}\#\Rightarrow^{t_i}\varepsilon)\in\derivGp{s_{i-1}+k_i,s_i}$.
Then, by induction hypothesis, there are $\xi_i\in h^{-1}(t_i)$ and $\theta_i\in\Delta^*$ such that $(B,s_{i-1}+k_i)\Rightarrow^{\theta_i}(\xi_i,s_i)$ for each $i\in[n]$. Now we proceed with a case distinction on $\tau$:

\emph{Case 1}: Let $\tau=A\to B_1\beta_1\ldots B_n\beta_n$. Then $\beta_i=\gamma^{k_i}$ for each $i\in[n]$ and $s_0=m_1$. By construction, the transitions $\bar\tau=A\xrightarrow{\top\slash 0}\tau([B_1\gamma^{k_1}],\ldots,[B_n\gamma^{k_n}])$ and $\tau_i=[B_i\gamma^{k_i}]\xrightarrow{\top\slash k_i}B$ are in $\Delta$. But then we obtain
\begin{align*}
    (A,m_1)\Rightarrow^{\bar\tau}&(\tau([B_1\gamma^{k_1}],\ldots,[B_n\gamma^{k_n}]),s_0)\Rightarrow^{\tau_1}(\tau(B_1,\ldots,[B_n\gamma^{k_n}]),s_0+k_1)\\
    \Rightarrow^{\theta_1}&(\tau(\xi_1,\ldots,[B_n\gamma^{k_n}]),s_1)\\
    &\quad \vdots\\
    \Rightarrow^{\theta_{n-1}}&(\tau(\xi_1,\ldots,\xi_{n-1},[B_n\gamma^{k_n}]),s_{n-1})\Rightarrow^{\tau_n}(\tau(B_1,\ldots,B_n),s_{n-1}+k_n)\\
    \Rightarrow^{\theta_n}&(\tau(\xi_1,\ldots,\xi_n),s_n).
\end{align*}

\emph{Case 2}: Let $\tau=A\gamma\to B_1\beta_1\ldots B_n\beta_n$. Then $\beta_i=\gamma^{k_i}$ for each $i\in[n]$ and $s_0=m_1-1$; thus, $m_1>0$. By construction, the transitions $\bar\tau=A\xrightarrow{{>} 0\slash -1}\tau([B_1\gamma^{k_1}],\ldots,[B_n\gamma^{k_n}])$ and $\tau_i=[B_i\gamma^{k_i}]\xrightarrow{\top\slash k_i}B$ are in $\Delta$. But then we obtain
\begin{align*}
    (A,m_1)\Rightarrow^{\bar\tau}&(\tau([B_1\gamma^{k_1}],\ldots,[B_n\gamma^{k_n}]),s_0)\Rightarrow^{\tau_1}(\tau(B_1,\ldots,[B_n\gamma^{k_n}]),s_0+k_1)\\
    \Rightarrow^{\theta_1}&(\tau(\xi_1,\ldots,[B_n\gamma^{k_n}]),s_1)\\
    &\quad \vdots\\
    \Rightarrow^{\theta_{n-1}}&(\tau(\xi_1,\ldots,\xi_{n-1},[B_n\gamma^{k_n}]),s_{n-1})\Rightarrow^{\tau_n}(\tau(B_1,\ldots,B_n),s_{n-1}+k_n)\\
    \Rightarrow^{\theta_n}&(\tau(\xi_1,\ldots,\xi_n),s_n).
\end{align*}

\emph{Case 3}: Let $\tau=A\#\to B_1\beta_1\ldots B_n\beta_n$. Then $\beta_1=\gamma^{k_1}\#$, $\beta_i=\gamma^{k_i}$ for each $i\in[n]\setminus\{1\}$, and $s_0=m_1=0$. By construction, the transitions $\bar\tau=A\xrightarrow{0\slash 0}\tau([B_1\gamma^{k_1}\#],\ldots,[B_n\gamma^{k_n}])$, $\tau_1=[B_1\gamma^{k_1}\#]\xrightarrow{\top\slash k_1}B$, and $\tau_i=[B_i\gamma^{k_i}]\xrightarrow{\top\slash k_i}B$ for $i\in[n]\setminus\{1\}$ are in $\Delta$. But then we obtain
\begin{align*}
    (A,0)\Rightarrow^{\bar\tau}&(\tau([B_1\gamma^{k_1}\#],\ldots,[B_n\gamma^{k_n}]),0)\Rightarrow^{\tau_1}(\tau(B_1,\ldots,[B_n\gamma^{k_n}]),0+k_1)\\
    \Rightarrow^{\theta_1}&(\tau(\xi_1,\ldots,[B_n\gamma^{k_n}]),s_1)\\
    &\quad \vdots\\
    \Rightarrow^{\theta_{n-1}}&(\tau(\xi_1,\ldots,\xi_{n-1},[B_n\gamma^{k_n}]),s_{n-1})\Rightarrow^{\tau_n}(\tau(B_1,\ldots,B_n),s_{n-1}+k_n)\\
    \Rightarrow^{\theta_n}&(\tau(\xi_1,\ldots,\xi_n),s_n).
\end{align*}

\sepstars

Now let $\xi\in\L(\A)$. Then there is some computation $(S,0)\Rightarrow^{\theta}(\xi,m)$ in $\compA{\xi}$ for some $m\in\Nat$. If $\xi=(S\to\varepsilon)$, $m=0$ and $\theta=\bar\tau$, where $\bar\tau=(S\xrightarrow{\top\slash 0} \tau)$ with $\tau=(S\to\varepsilon)$. By construction, $\tau\in P$ and, thus, $(S\Rightarrow^{\tau}\varepsilon)\in\derivGp{\varepsilon}$. Otherwise, by construction, $\theta$ has to be of the form $\bar\tau\theta'$ with $\bar\tau=(S\xrightarrow{\top\slash 0}\tau(A))$ for some $A\in N$ and $\tau=(S\to\A\#)\in P$. Moreover, $\xi=\tau(\xi_1)$. Then $(A,0)\Rightarrow^{\theta'}(\xi_1,m)$ for some $m\in\Nat$ and, by Property \ref{prop:empt1}, $(A\#\Rightarrow^{h(\xi_1)}\varepsilon)\in\derivGp{0,m}$. But then $(S\Rightarrow^{\tau h(\xi_1)}\varepsilon)\in\derivGp{\varepsilon}$.


Now let $(S\Rightarrow^t\varepsilon)\in\derivGp{\varepsilon}$. With a similar argumentation as above and Property \ref{prop:empt2} we can argue that there is a $\xi\in h^{-1}(t)$ with $\xi\in\L(\A)$. \qed
\end{proof}

\subsection{Membership}

\textbf{Lemma\;\ref{lem:behaviour-automaton-language}.}
Let $\A = (Q, \Sigma, q_0, \Delta)$ be a GOCTA and $\xi \in T_\Sigma$.
If $\A'$ is the $(|\xi|\cdot|Q|^2+1)$-bounded behaviour automaton of $\A$, then
$\xi \in \L(\A) \iff \xi \in \L(\A')$.

\begin{proof}
    We assume that $\A$ is normalized and $0$-accepting.
    Let $m = |\xi|\cdot|Q|^2+1$ and $\A' = (Q', \Sigma, (q_0, 0, 0), \Delta')$.

    \enquote{$\xi \in \L(\A) \implies \xi \in \L(\A')$}:
    If $\xi \in \L(\A)$, then, by Lemma~\ref{lem:polybound},
    there exists $t \in \compA{\xi}$
    such that $\maxcnt(t) \le m$.
    We let $t = (\zeta_0, c_0) \Rightarrow^{\tau_1 \cdots \tau_n} (\zeta_n, c_n)$
    and we note that $\zeta_0 = q_0$, $\zeta_n = \xi$, and $c_0 = c_n = 0$.
    We will inductively define a computation of $\xi$ in $\A'$,
    denoted by $t' = \zeta_0' \Rightarrow^{\tau'_1 \cdots \tau'_n} \zeta_n'$
    such that, for every $i \in [0,n-1]$, the following invariant holds.
    If $\zeta_i = \ang{\xi}_k[q, \bar{q}]$,
    then $\zeta_i' = \ang{\xi}_k[(q,c_i,c'), \bar{p}]$ with $|\bar{q}| = |\bar{p}|$.

    We start by letting $\zeta_0' = (q_0, 0, 0)$.
    Clearly, for $i = 0$, the invariant holds.
    Now let $i \in [n]$, $\zeta_{i-1} = \ang{\xi}_k[q, \bar{q}]$, and $\zeta_{i-1}' = \ang{\xi}_k[(q,c_{i-1},c'), \bar{p}]$.
    Since $(\zeta_{i-1},c_{i-1}) \Rightarrow^{\tau_i} (\zeta_i,c_i)$,
    we need to consider two cases.
    \begin{enumerate}
        \item If $\tau_i = q \xrightarrow{\p \slash \z} q'$,
            $\zeta_i = \ang{\xi}_k[q', \bar{q}]$, $c_i = c_{i-1} + \z \ge 0$,
            and $\p(c_{i-1})$ holds,
            then we let $\tau'_i = (q, c_{i-1}, c') \to (q', c_i, c')$
            and $\zeta_i' = \ang{\xi}_k[(q', c_i, c'), \bar{p}]$.
            Moreover, $c_{i-1}, c_i, c' \le \maxcnt(t) \le m$.
            Thus, by definition of $\A'$, we have $\tau'_i \in \Delta'$.
            Clearly, $\zeta_{i-1}' \Rightarrow^{\tau'_i} \zeta_i'$.
        \item If $\tau_i = q \to \sigma(p_1, \dots, p_\ell)$,
            $\zeta_i = \ang{\xi}_{k+1}[p_1, \dots, p_\ell, \bar{q}]$, $c_i = c_{i-1}$,
            and $\sigma = \xi(\rho)$ where $\rho$ is the lexicographically $(k+1)$-st element of $\pos(\xi)$,
            then, intuitively, for each $j \in [2,\ell]$ we have to encode the counter value
            of $\A$ when deriving $p_j$ into
            the state which corresponds to $p_j$ in the computation of $\xi$ in $\A'$.
            For this, we peek at the step when $p_j$ is derived in the computation of $\xi$ in $\A$.
            Formally, for each $j \in [2, \ell]$,
            we let $w_j \in \pos(\ang{\xi}_{k+1})$ such that $\ang{\xi}_{k+1}(w_j) = x_j$
            and $u_j = \min \{i' \in [n] \mid \zeta_{i'}(w_j) = x_1 \}$.
            Then, we let
            \[
                \tau'_i = (q, c_{i-1}, c') \to \sigma((p_1, c_{i-1}, c_{u_2}), (p_2, c_{u_2}, c_{u_3}), \dots, (p_\ell, c_{u_\ell}, c'))
            \]
            and
            \[
                \zeta_i' = \ang{\xi}_{k+1}[(p_1, c_{i-1}, c_{u_2}), (p_2, c_{u_2}, c_{u_3}), \dots, (p_\ell, c_{u_\ell}, c'), \bar{p}].
            \]
            In order to show that $\tau'_i \in \Delta'$, we distinguish three cases.
            \begin{enumerate}
                \item If $\ell > 0$, then $\tau'_i \in \Delta'$ by definition of $\A'$.
                \item If $\ell = 0$ and $i < n$,
                    then $\bar{p} \not= \varepsilon$ and we let $(p,c'',c''') = \bar{p}[1]$.
                    By construction of $(p,c'',c''')$, we have $c_i = c''$
                    and $c' = c''$.
                    Thus, $c_{i-1} = c_i = c'$ and $\tau'_i \in \Delta'$ by definition of $\A'$.
                \item If $\ell = 0$ and $i = n$, then $\bar{p} = \varepsilon$.
                    Moreover, by definition of $\Delta'$, for every $i' \in [n-1]$ it holds that the third component of the state at the lexicographically largest position of $\zeta'_{i'}$ is $0$, hence $c' = 0$.
                    Since $(\zeta_{i-1}, c_{i-1}) \Rightarrow^{\tau_i} (\xi, 0)$,
                    we obtain $c_{i-1} = 0$.
                    Thus, $c_{i-1} = c_i = c'$ and $\tau'_i \in \Delta'$ by definition of $\A'$.
            \end{enumerate}
            In all three cases,
            it is obvious that $\zeta_{i-1}' \Rightarrow^{\tau'_i} \zeta_i'$.
    \end{enumerate}
    We have shown that $(q_0, 0, 0) = \zeta_1' \Rightarrow^{\tau'_1 \cdots \tau'_n} \zeta_n' = \xi$
    and hence $\xi \in \L(\A')$.

    \enquote{$\xi \in \L(\A') \implies \xi \in \L(\A)$}:
    If $\xi \in \L(\A')$, there are $n \in \Nat$,
    $\zeta_0, \dots, \zeta_n \in T_\Sigma(Q')$, and $\tau'_1, \dots, \tau'_n \in \Delta'$
    such that $\zeta_0 \Rightarrow^{\tau'_1 \cdots \tau'_n} \zeta_n$,
    $\zeta_0 = (q_0, 0, 0)$, and $\zeta_n = \xi$.
    We will inductively define a computation of $\xi$ in $\A$,
    denoted by $t' = (\zeta_0', c_0) \Rightarrow^{\tau_1 \cdots \tau_n} (\zeta_n', c_n)$
    such that, for every $i \in [0,n-1]$, the following invariant holds.
    If $\zeta_i = \ang{\xi}_k[(q,c,c'), \bar{p}]$,
    then $\zeta_i' = \ang{\xi}_k[q, \bar{q}]$
    with $|\bar{p}| = |\bar{q}|$, and $c = c_i$.

    We start by letting $\zeta_0' = q_0$ and $c_0 = 0$.
    Then, clearly, the invariant holds for $i = 0$.
    Now let $i \in [n]$, $\zeta_{i-1} = \ang{\xi}_k[(q,c_{i-1},c'), \bar{p}]$,
    and $\zeta_{i-1}' = \ang{\xi}_k[q, \bar{q}]$.
    Since $\zeta_{i-1} \Rightarrow^{\tau'_i} \zeta_i$,
    we need to consider three cases.
    \begin{enumerate}
        \item If $\tau'_i = (q,c_{i-1},c') \to (q',c_{i-1}+\z,c')$ and
            $\zeta_i = \ang{\xi}_k[(q',c_{i-1}+\z,c'), \bar{p}]$
            for some $\z \in \Nat$ with $c_{i-1} + \z \in [0, m]$,
            then, by definition of $\Delta'$,
            there exists $\p \in \{ 0, {>}0, \top \}$ such that $\p(c_{i-1})$ holds
            and $(\tau_i = q \xrightarrow{\p \slash \z} q') \in \Delta$.
            We let $\zeta_i' = \ang{\xi}_k[q', \bar{q}]$ and $c_i = c_{i-1} + \z$;
            clearly, $(\zeta_{i-1}', c_{i-1}) \Rightarrow^{\tau_i} (\zeta_i', c_i)$.
        \item If $\tau'_i = (q,c_{i-1},c') \to \sigma((p_1,c_{i-1},c_2'), (p_2,c_2',c_3'), \dots, (p_\ell,c_\ell',c'))$, \\
            $\zeta_i = \ang{\xi}_{k+1}[(p_1,c_{i-1},c_2'), (p_2,c_2',c_3'), \dots, (p_\ell,c_\ell',c'),\bar{p}]$,
            and $\sigma = \xi(\rho)$ where $\rho$ is the lexicographically $(k+1)$-st element of $\pos(\xi)$,
            then, by definition of $\Delta'$, it holds that
            $(\tau_i = q \to \sigma(p_1, \dots, p_\ell)) \in \Delta$.
            We let $\zeta_i' = \ang{\xi}_{k+1}[p_1, \dots, p_\ell, \bar{q}]$ and $c_i = c_{i-1}$;
            clearly, $(\zeta_{i-1}', c_{i-1}) \Rightarrow^{\tau_i} (\zeta_i', c_i)$.
        \item If $\tau'_i = (q,c_{i-1},c') \to \alpha$,
            $\zeta_i = \ang{\xi}_{k+1}[\bar{p}]$, $c_{i-1} = c'$,
            and $\alpha = \xi(\rho)$ where $\rho$ is the lexicographically $(k+1)$-st element of $\pos(\xi)$,
            then, by definition of $\Delta'$, it holds that
            $(\tau_i = q \to \alpha) \in \Delta$.
            We distinguish two cases.
            \begin{enumerate}
                \item If $i = n$, then $\ang{\xi}_{k+1} = \xi$ and $\bar{q} = \bar{p} = \varepsilon$.
                    Since $(q_0,0,0) \Rightarrow^{\tau'_1 \cdots \tau'_n} \xi$,
                    we have $c' = 0$ and hence $c_{i-1} = 0$.
                    Clearly, $(\zeta_{i-1}',0) \Rightarrow^{\tau_i} (\xi,0)$.
                \item If $i < n$, then $\bar{p} \not= \varepsilon$
                    and we let $(p, c'', c''') = \bar{p}[1]$.
                    Moreover, there exists $i' \in [i-1]$, $k' \in [0,k-1]$,
                    $p'', p''' \in Q$, and $d, d' \in [0,m]$
                    such that
                    \begin{align*}
                        \ang{\xi}_{k'}[(p'',c_{i'-1},d), \bar{p'}] 
                        &\Rightarrow^{\tau'_{i'}}
                        \ang{\xi}_{k'+1}[\dots, (p''',d',c'), (p',c'',c''') \dots, \bar{p'}] \\
                        &\Rightarrow^{\tau'_{i'+1} \cdots \tau'_{i_1}}
                        \ang{\xi}_k[(q,c_{i-1},c'), \bar{p}]
                    \end{align*}
                    and $c' = c''$; thus, also $c_{i-1} = c''$.
                    We let $\zeta_i' = \ang{\xi}_{k+1}[\bar{q}]$ and $c_i = c_{i-1}$;
                    clearly, $(\zeta_{i-1}', c_{i-1}) \Rightarrow^{\tau_i} (\zeta_i', c_i)$.
            \end{enumerate}
    \end{enumerate}
    We have shown that $(q_0, 0) = (\zeta_1', c_0) \Rightarrow^{\tau_1 \cdots \tau_n} (\zeta_n', c_n) = (\xi, 0)$
    and hence $\xi \in \L(\A)$.\qed
\end{proof}

\end{document}